\def\withcomments{0}
\newcommand{\ab}{k}
\newcommand{\dst}{\alpha}
\newcommand{\priv}{\varepsilon}
\newcommand{\p}{\mathbf{p}}
\newcommand{\q}{\mathbf{q}}
\newcommand{\sa}{n}
\newcommand{\rsa}{N}
\newcommand{\Paren}[1]{\mleft(#1\mright)}
\newcommand{\cmargin}[1]{\todo[backgroundcolor=white]{\textbf{C:} #1}}
\newcommand{\cnote}[1]{\todo[textcolor=orange!70!black,backgroundcolor=white,inline]{\textbf{C:} #1}}
\newcommand{\hmargin}[1]{\todo[backgroundcolor=white]{\textbf{H:} #1}}
\newcommand{\hnote}[1]{\todo[textcolor=blue,backgroundcolor=white,inline]{\textbf{H:} #1}}
\newcommand{\cmargin}[1]{}
\newcommand{\cnote}[1]{}
\newcommand{\hmargin}[1]{}
\newcommand{\hnote}[1]{}
\newtheorem{theorem}{Theorem}[section]
\newtheorem{lemma}{Lemma}
\crefname{fact}{Fact}{Facts}
\crefname{lemma}{Lemma}{Lemmas}
\newtheorem{definition}{Definition}
\newcommand{\eqdef}{\coloneqq}
\newcommand{\clg}[1]{\mleft\lceil#1\mright\rceil}
\newcommand{\flr}[1]{\mleft\lfloor#1\mright\rfloor}
\newcommand{\bU}{\mathbf{U}}
\newcommand{\cA}{\mathcal{A}}
\newcommand{\cM}{\mathcal{M}}
\newcommand{\cP}{\mathcal{P}}
\newcommand{\cR}{\mathcal{R}}
\newcommand{\cS}{\mathcal{S}}
\newcommand{\cX}{\mathcal{X}}
\newcommand{\cY}{\mathcal{Y}}
\newcommand{\cZ}{\mathcal{Z}}
\newcommand{\Ber}[1]{\mathbf{Ber}\mleft(#1\mright)}
\newcommand{\Bin}{\mathbf{Bin}}
\newcommand{\Pois}{\mathbf{Poi}}
\newcommand{\Z}{\mathbb{Z}}
\newcommand{\ex}[2]{{\ifx&#1& \mathbb{E} \else \underset{#1}{\mathbb{E}} \fi \mleft[#2\mright]}}
\newcommand{\condex}[3]{{\ifx&#1& \mathbb{E} \else \underset{#1}{\mathbb{E}} \fi \mleft[#2\;\middle|\; #3\mright]}}
\newcommand{\proba}[2][]{{\ifx&#1& \Pr \else \underset{#1}{\Pr} \fi \mleft[#2\mright]}}
\newcommand{\ind}[1]{\mathbbm{1}\mleft[#1\mright]}
\newcommand{\N}{\mathbb{N}}
\newcommand{\norm}[1]{\mleft\vert\mleft\vert #1 \mright\vert\mright\vert}
\newcommand{\R}{\mathbb{R}}
\newcommand{\tv}[2]{\|#1 - #2\|_\mathrm{TV}}
\newcommand{\ut}{\mathsf{UT}}
\newcommand{\Var}[2]{\operatorname{Var}_{#1}\mleft[#2\mright]}
\newcommand{\totalvardist}[2]{\operatorname{d}_{\rm{}TV}(#1,#2)}
\newcommand{\email}[1]{\href{mailto:#1}{#1}}
\algnewcommand{\LineComment}[1]{\State \(\triangleright\) #1}
\newcommand{\outab}{K} %
\title{Uniformity Testing in the Shuffle Model:\\Simpler, Better, Faster} %
\author{Cl\'ement L. Canonne\thanks{University of Sydney. Email: \email{clement.canonne@sydney.edu.au}.} \and Hongyi Lyu\thanks{University of Melbourne. Email: \email{hongyi.lyu@student.unimelb.edu.au}.}}
\date{June 2021}
\begin{document}

\maketitle

\begin{abstract}
    Uniformity testing, or testing whether independent observations are uniformly distributed, is \emph{the} prototypical question in distribution testing. Over the past years, a line of work has been focusing on uniformity testing under privacy constraints on the data, and obtained private and data-efficient algorithms under various privacy models such as central differential privacy (DP), local privacy (LDP), pan-privacy, and, very recently, the shuffle model of differential privacy.
    
    In this work, we considerably simplify the analysis of the known uniformity testing algorithm in the shuffle model, and, using a recent result on ``privacy amplification via shuffling,'' provide an alternative algorithm attaining the same guarantees with an elementary and streamlined argument.
\end{abstract}

\section{Introduction}
Learning from, or, more generally, performing statistical inference on sensitive or private data has become an increasingly important topic, where one must balance the desire to achieve good accuracy with the requirement to preserve privacy of the users' data. Among the many tasks concerned, hypothesis testing and, more specifically, \emph{goodness-of-fit testing} is of particular importance, given its ubiquitous role in data analysis, the natural sciences, and more broadly as a workhorse of statistics and machine learning.

In this paper, we consider the specific case of \emph{uniformity testing}, the prototypical example of goodness-of-fit testing of discrete distributions, where one seeks to decide whether the data is drawn uniformly from a known finite domain. Investigating the trade-off between accuracy (or, equivalently, data requirements) and privacy for this task has received considerable attention over the past years in a variety of privacy models, including the central and local models of differential privacy, the so-called pan-privacy, and the recently proposed model of \emph{shuffle privacy}. 

Unfortunately, while this trade-off is now well understood in most of the aforementioned privacy settings, some of the proposed algorithms remain relatively complex and far from practical, and their analysis quite involved. With this in mind, we focus in this paper on private uniformity testing in the shuffle model, both simplifying the analysis of the existing algorithms for this task and obtaining a new, arguably simpler one with the same guarantees.

\subsection{Previous work}
    \label{ssec:previous}
Testing uniformity of discrete distributions was first considered from the theoretical computer science viewpoint in~\cite{GoldreichR00}, and the optimal sample complexity $\Theta(\sqrt{\ab}/\dst^2)$ obtained in~\cite{Paninski08}~--~where $\ab$ denotes the domain size and $\dst$ the distance parameter.\footnote{See~\cref{2} for a formal definition of the uniformity testing question and the different privacy models considered.} Over the past years, several followup works refined this result, for instance to generalise it to \emph{identity} testing (reference distribution other than uniform)~\cite{BatuFFKRW01,ValiantV17,AcharyaDK15,DiakonikolasKN15,DiakonikolasK16,Goldreich16} or to pinpoint the optimal dependence on the error probability~\cite{HuangM13,DiakonikolasGPP18}.

The question was then revisited from the privacy perspective by~\cite{CaiDK17}, after which~\cite{AcharyaSZ18,AliakbarpourDR18} established the tight sample complexity bound 
$\Theta(\sqrt{\ab}/\dst^2 + \sqrt{\ab}/(\dst\sqrt{\priv}) + {\ab^{1/3}}/(\dst^{4/3}\priv^{2/3}) + 1/(\dst\priv))$
under (central) differential privacy. In the more stringent locally private setting, the question was later raised in~\cite{Sheffet18}, and fully answered in a sequence of works which show that the tight sample complexity is $\Theta(\ab^{3/2}/(\dst^2\priv^2))$ for non-interactive private-coin protocols and $\Theta(\ab/(\dst^2\priv^2))$ for non-interactive public-coin protocols (where the users have access to a common random seed)~\cite{AcharyaCFT19,AcharyaCT20a,AcharyaCFST21}, and that allowing interactive protocols does not improve the sample complexity beyond $\Theta(\ab/(\dst^2\priv^2))$~\cite{AminJM20,BerrettB20,AcharyaCLST21}.

Focusing on the different model of \emph{pan-privacy},~\cite{AminJM20} establishes a sample complexity upper bound of $O(\ab^{2/3}/\dst^{4/3}\priv^{2/3})+\sqrt{\ab}/\dst^2+\sqrt{\ab}/(\dst\priv))$, as well as a near-matching lower bound of $\Omega(\ab^{2/3}/\dst^{4/3}\priv^{2/3})+\sqrt{\ab}/\dst^2+1/(\dst\priv))$. Interestingly, the leading dependence on the domain size $\ab$, while sublinear, is significantly greater than the $\sqrt{\ab}$ dependence in the central DP case, thus placing the ``cost of privacy'' in the pan-private model strictly in-between central and local privacy.

Finally, the work most relevant to ours,~\cite{BalcerCJM21}, tackles uniformity testing in the shuffle model of privacy. Building on a connection between \emph{robust} shuffle private algorithms and pan-private algorithms, they use ideas from~\cite{AminJM20} to derive both upper and lower bounds on the sample complexity of shuffle private uniformity testing. Specifically, they provide a \emph{private-coin}, \emph{robust}, \emph{approximate}-DP algorithm for uniformity testing in the shuffle model with sample complexity
\begin{equation}
    \label{eq:sc:privatecoin}
O\Paren{\frac{\ab^{3/4}}{\dst\priv}\log^{1/2}\frac{\ab}{\delta} +
\frac{\ab^{2/3}}{\dst^{4/3}\priv^{3/3}}\log^{1/3}\frac{\ab}{\delta} +
\frac{\sqrt{\ab}}{\dst^2} }\,.
\end{equation}
Leveraging, as in~\cite{AminJM20}, the ``domain compression'' technique from~\cite{AcharyaCFT19,AcharyaCT20b}, one can then derive a \emph{public-coin}, robust, approximate-DP testing algorithm with sample complexity
\begin{equation}
    \label{eq:sc:publiccoin}
O\Paren{\Paren{\frac{\ab^{2/3}}{\dst^{4/3}\priv^{2/3}} + \frac{\sqrt{\ab}}{\dst\priv} + \frac{\sqrt{\ab}}{\dst^2}}\log^{1/2}\frac{\ab}{\delta} }\,.
\end{equation}
They conclude by deriving a sample complexity lower bound for public-coin, robust, \emph{pure}-DP testing algorithms of
\begin{equation}
    \label{eq:sc:publiccoin:lb}
\Omega\Paren{\frac{\ab^{2/3}}{\dst^{4/3}\priv^{2/3}} + \frac{1}{\dst\priv} + \frac{\sqrt{\ab}}{\dst^2}}\,.
\end{equation}
We refer the reader to~\cite{Canonne20,Cheu21} and references therein for more on the distribution testing literature, both without and with privacy constraints.

\subsection{Our contributions} %

In this paper, we focus on private-coin, robust shuffle private algorithms for uniformity testing; specifically, on the upper bound~\eqref{eq:sc:privatecoin} established in~\cite{BalcerCJM21}, which as discussed above implies, in a black-box fashion, the public-coin upper bound~\eqref{eq:sc:publiccoin}. However, the proof of~\eqref{eq:sc:privatecoin} from~\cite{BalcerCJM21} is quite unwieldy, and the analysis involves breaking the proposed test statistic in 4 parts, before dealing which each of them separately, invoking intermediary lemmas (from previous work, and new ones) in order to handle each part. As a result, the argument involves many moving parts and seemingly arbitrary constants, and results in an algorithm which, while theoretically sound, does not appear intuitive nor practical.

Our contribution is fourfold. We first revisit the algorithm of~\cite{BalcerCJM21}, making a minor modification to the analyser's side; in turn, this minor modification, combined with a simple observation (\cref{Binomial of Poisson}), lets us considerably streamline and simplify the analysis of the algorithm. At a high level, this observation implies that all the resulting random quantities considered, in spite of the complicated process that generates them, \emph{turn out to have a very simple and manageable distribution}~--~namely, they are Poisson distributed. This follows from the aforementioned~\cref{Binomial of Poisson}, along with basic properties of Poisson distributions such as stability under convolutions. This in turn lets us easily bound the expectation and variance of our test statistic, by relying on the known moments of Poisson random variables.

Having streamlined the analysis of the algorithm, our second contribution is to simplify and slightly improve its statement. Indeed, our approach immediately allows us to shave the spurious logarithmic dependence on $\ab$ (while keeping the same robust shuffle privacy guarantees). We also note that out of the three terms of the sample complexity stated in~\eqref{eq:sc:privatecoin}, the middle term is always dominated by the other two, and thus can be removed. This leads to our sample complexity
\begin{equation}
    \label{eq:sc:privatecoin:better}
O\Paren{\frac{\ab^{3/4}}{\dst\priv}\sqrt{\log\frac{1}{\delta}} + \frac{\sqrt{\ab}}{\dst^2} }
\end{equation}
for private-coin, robust shuffle private uniformity testing.\footnote{We emphasise that this simplified sample complexity is not our main contribution, which lies in the improved analysis of the algorithm.}

Our third contribution is a \emph{new} private-coin shuffle algorithm, achieving the same sample complexity~\eqref{eq:sc:privatecoin:better}, privacy, and robustness as the previous one through a totally different approach. Namely, we employ the ``privacy amplification by shuffling'' paradigm~\cite{ErlingssonFMRTT19} to convert a \emph{locally private} uniformity testing algorithm to a robust shuffle private testing algorithm. In order to do so, we leverage the recent result of~\cite{FeldmanMT20}, which provides an amplification-by-shuffling theorem optimal in all privacy regimes~--~quite fortunately, as an optimal dependence on $\priv$ even in the low-privacy regime is exactly what we need to achieve~\eqref{eq:sc:privatecoin:better}.

This leads us to our fourth and last contribution: a locally private uniformity testing. Although optimal LDP testing algorithms are known in the high-privacy regime~\cite{AcharyaCFT19,AcharyaCFST21}, for our LDP-to-shuffle conversion to go through we need to start from an LDP testing algorithm with very low privacy (large $\priv$). Building on ideas from~\cite{AcharyaSZ19}, we adapt the algorithm from~\cite{AcharyaCFT19} to this low-privacy setting, obtaining as a byproduct a locally private uniformity testing algorithm with optimal sample complexity in all parameter regimes.

To conclude this discussion, we note that using the same ``domain compression'' method as~\cite{BalcerCJM21} (see \cref{theo:random:dct:hashing} for the statement) in a blackbox fashion, both our private-coin algorithms immediately imply simple \emph{public-coin} robust shuffle private algorithms with sample complexity~\eqref{eq:sc:publiccoin}; actually, the slightly better\footnote{This complexity being, up to the absence of $\ab$ in the logarithm, what~\cite{BalcerCJM21} actually proved, if one looks at their derivation carefully to improve some inequalities.}
\begin{equation}
    \label{eq:sc:publiccoin:better}
O\Paren{\frac{\ab^{2/3}}{\dst^{4/3}\priv^{2/3}}\log^{1/3} \frac{1}{\delta} + \frac{\sqrt{\ab}}{\dst\priv}\log^{1/2} \frac{1}{\delta} + \frac{\sqrt{\ab}}{\dst^2}}\,.
\end{equation}
Therefore, our simplifications and our new algorithm also end up applying to the public-coin result, by combining them with domain compression (which is the only part relying on public randomness).

\subsection{Organization}  %
Basic definitions are provided in Preliminaries (\cref{2}), along with a few relevant theorems (either from previous works or ``folklore''). We then revisit the algorithm presented in~\cite{BalcerCJM21} for lower bounds in uniformity testing to both simplify the argument and improve the bound (\cref{3}). In addition, we propose an alternative approach to obtain the same improved bound in \cref{4}, based on a new \emph{locally private} testing algorithm and the ``amplification by shuffling'' paradigm.
\section{Preliminaries}
\label{2}
Throughout, we use the standard asymptotic notation $O(\cdot)$, $\Omega(\cdot)$, $\Theta(\cdot)$, and denote by $\log$ the natural logarithm. Given an integer $\ab \in \N$, we write $[\ab]$ for the set $\{1,2,\dots,\ab\}$, and $\bU_\ab$ for the uniform distribution over $[\ab]$. When clear from context, we will drop the subscript and simply write $\bU$. Finally, given two tuples $x,y\in \cY^\ast$, we write $x\sqcup y\in\cY^\ast$ for their concatenation.

\paragraph{Differential privacy.} We formally define the three notions of privacy central to this paper. Two $n$-elements datasets $\vec{x},\vec{x}'\in\cX^n$ are said to be \emph{adjacent} if they differ in at most one element, i.e., if their Hamming distance is at most one.

\begin{definition}[Differential Privacy~\cite{DworkMNS06}]
Fix $\priv>0$ and $\delta\in[0,1]$. A (randomised) algorithm $\cM\colon\cX^n\to \cZ$ is said to be \emph{$(\priv,\delta)$-differentially private} (DP) if
\[
    \proba{\cM(\vec{x})\in S} \leq e^\priv \proba{\cM(\vec{x}')\in S} +\delta
\]
for all $S\subseteq \cZ$ and all adjacent $\vec{x},\vec{x}'\in\cX^n$.
\end{definition}
This definition underpins all the privacy models we consider; the difference between them lies in which part of the protocol is required to be differentially private (the individual messages from the users, or the tuple of all users' messages after random permutation), as illustrated in~\cref{fig:dps}. Moreover, one can also consider two types of distributed protocols, depending on whether the users share some common random seed (which is independent of their data, and as such not subject to privacy requirements), which could in some case allow them to achieve better accuracy. Following the standard terminology, these two types of protocols are referred to as \emph{public-} and \emph{private-coin} protocols (the tossed ``coins'' being the randomness available): importantly, the ``private'' in private-coin protocols does not refer to (differential) privacy, but to the randomness available to each user being independent from the others'.
\begin{definition}[Local Differential Privacy~\cite{DworkMNS06,KLNRS11}]
A (non-interactive) protocol $\Pi$ for $n$ users in the \emph{local model} consists of the following:
\begin{itemize}
    \item $n$ \emph{randomisers} $\cR_1,\dots, \cR_n\colon \cX\times \{0,1\}^r\to\cY$, each mapping a data point $x$ and public random bits $u$ to a message $\cR_i(x,u)$;
    \item an \emph{analyser} $\cA\colon \cY^n\times \{0,1\}^r \to \cZ$, which takes as input the $n$ messages and the public random bits
\end{itemize}
(all randomised mappings). On input $\vec{x}\in\cX^n$, the output of $\Pi$ is $\cA(\cR_1(\vec{x}_1, U), \dots, \cR_n(\vec{x}_n, U), U)$, where $U$ is uniform on $\{0,1\}^r$. If $r=0$ (no public randomness), $\Pi$ is a \emph{private-coin} protocol; otherwise, it is a \emph{public-coin} protocol. For $\priv>0$ and $\delta\in[0,1]$, $\Pi$ is said to be \emph{$(\priv,\delta)$-locally differentially private} (LDP) if each $\cR_i$ is $(\priv,\delta)$-DP.
\end{definition}

\begin{definition}[Shuffle Differential Privacy~\cite{BittauEMMRLRKTS17,CheuSUZZ19}]
A  (non-interactive) shuffle protocol $\Pi$ for $n$ users consists of 
\begin{itemize}
    \item a \emph{randomiser} $\cR \colon \cX\times \{0,1\}^r  \to \cY^\ast$, which maps a data point $x$ and public random bits $u$ to a  (possibly variable-length) tuple of messages;
    \item a \emph{shuffler} $\cS\colon \cY^\ast \to \cY^\ast$, which concatenates $n$ tuples of messages and returns a uniformly random permutation of the resulting tuple; and
    \item an \emph{analyser} $\cA\colon \cY^\ast\times \{0,1\}^r \to \cZ$, which takes as input the permuted tuple of messages and the public random bits
\end{itemize} 
(all randomised mappings). On input $\vec{x}\in\cX^n$, the output of $\Pi$ is 
\[
    \cA( \cS\circ\cR^n(\vec{x},U), U)\,,
\]
where $U$ is uniform on $\{0,1\}^r$ and
\[
    \cS\circ\cR^n(\vec{x},U) \eqdef \cS(\cR(\vec{x}_1, U)\sqcup \dots \sqcup\cS(\cR(\vec{x}_n, U))
\]
is the permuted tuples of messages output by the shuffler. If $r=0$ (no public randomness), $\Pi$ is a \emph{private-coin} protocol; otherwise, it is a \emph{public-coin} protocol. For $\priv>0$ and $\delta\in[0,1]$, $\Pi$ is said to be \emph{$(\priv,\delta)$-shuffle differentially private} (shuffle DP) if the output of the shuffler $\cS\circ\cR^n(\cdot, u)$ is $(\priv,\delta)$-DP for every fixed $u\in\{0,1\}^r$.
\end{definition}
While a bit cumbersome to parse, this definition captures the idea of only requiring the set of messages from the users to be differentially private \emph{after} being randomly shuffled, in contrast to asking this directly of the set of messages (in the more stringent LDP setting) or only of the analyser's output (in the more permissive DP setting). See~\cref{fig:dps} for an illustration.
\begin{figure}[htp]
    \centering
    \includegraphics[height=0.26\textwidth]{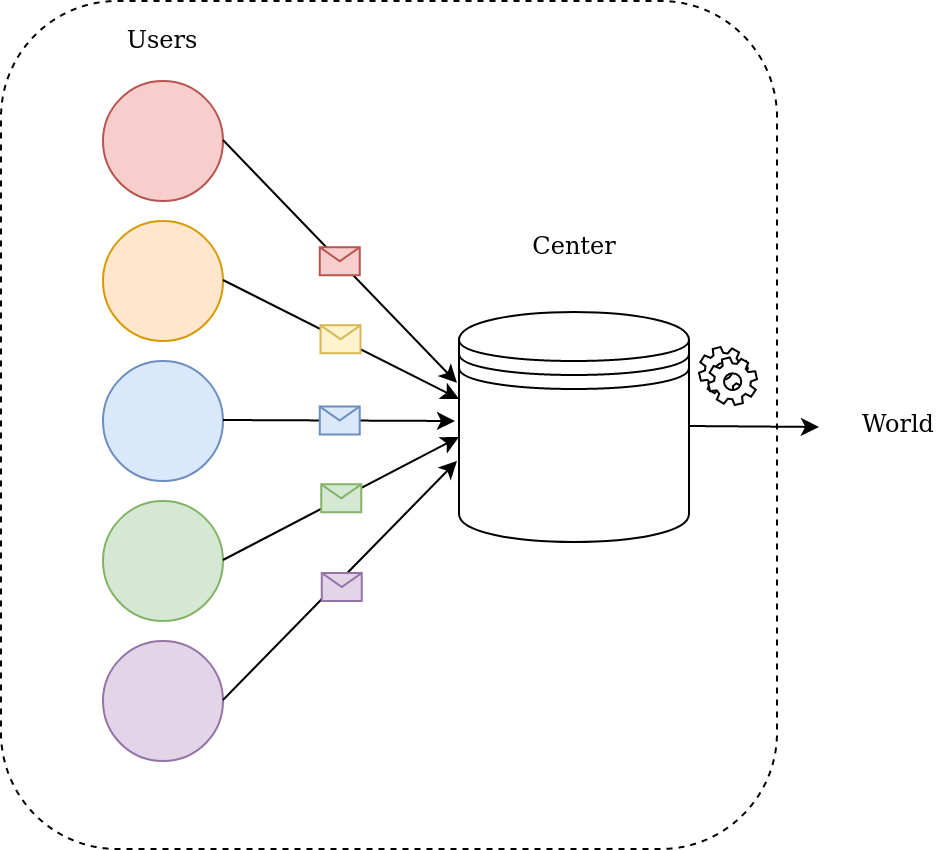}\hfill
    \includegraphics[height=0.26\textwidth]{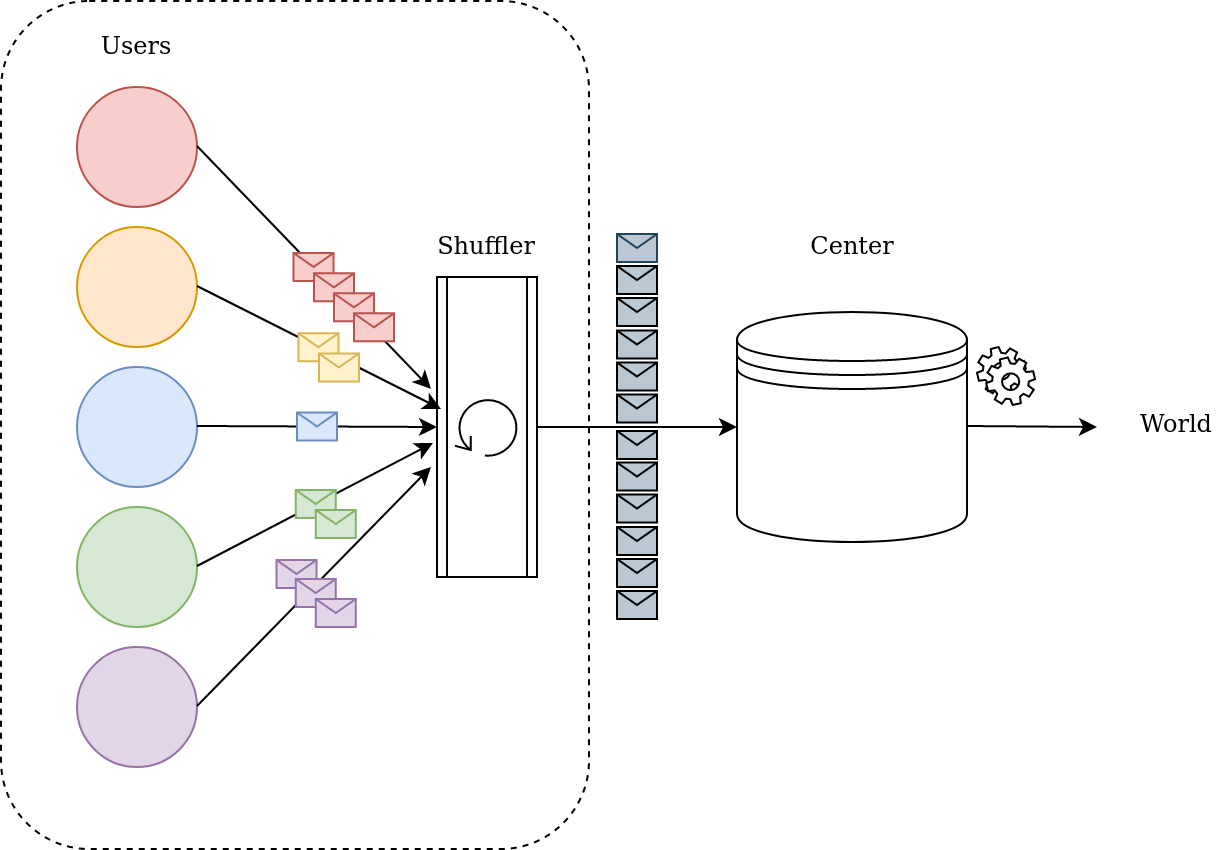}\hfill
    \includegraphics[height=0.26\textwidth]{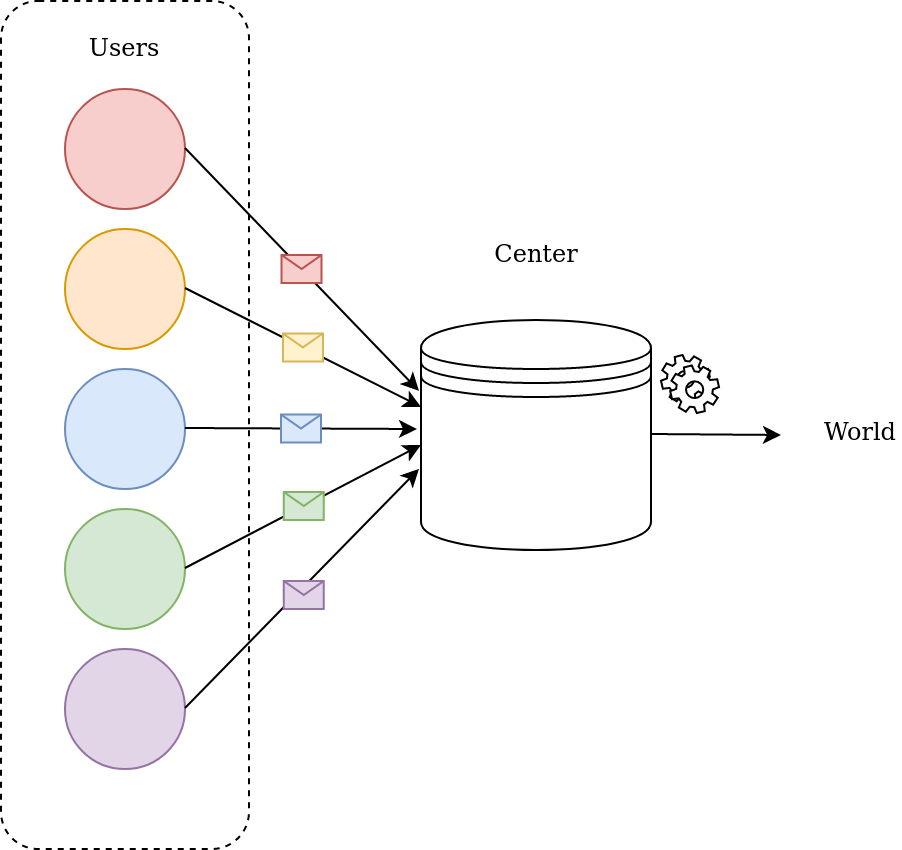}
    \caption{An illustration of the DP (left), shuffle DP (middle), and LDP (right) settings. In all cases, the dotted box denotes which part of the system must satisfy $(\priv,\delta)$-differential privacy.}
    \label{fig:dps}
\end{figure}

Finally, we will require one last variant of the shuffle DP setting, which guarantees privacy even if a subset of the users acts maliciously.
\begin{definition}[Robust Shuffle Differential Privacy~\cite{BalcerCJM21}]
A  (non-interactive) shuffle protocol $\Pi=(\cR,\cS,\cA)$ for $n$ users is \emph{robustly} shuffle DP if there exist two non-increasing, continuous functions $\bar{\priv} >0, \bar{\delta}\in[0,1]$ such that $\cS\circ\cR^{\gamma \sa}(\cdot, u)$ is $(\bar{\priv}(\gamma),\bar{\delta}(\gamma))$-DP for every $u$, and every $\gamma \in(0, 1]$.
\end{definition}
Loosely speaking, the robustness requirement ensures that the shuffle protocol still provides meaningful privacy even if some of the $\sa$ users depart from the protocol; in which case the privacy guarantee holds for the remaining honest users, and degrades smoothly with their number $\gamma \sa$. 
We refer the reader to~\cite{DworkR13} for more detail on differential privacy and local privacy; and to~\cite{Cheu21} for further discussion of the interplay between those notions and shuffle (resp., robust shuffle) privacy. 

\paragraph{Distribution testing.} We now define the main focus of this paper, \emph{uniformity testing.}

\begin{definition}[Uniformity Testing]
Let $\p$ be an unknown distribution over $[\ab]$. A \emph{uniformity testing algorithm with sample complexity $n$} takes inputs $\dst\in (0,1]$ and a set of $n$ i.i.d.\ samples from $\p$ and outputs either ``\emph{accept}'' or ``\emph{reject}'' such that the following holds.
\begin{itemize}
\item If $\p=\bU$, then the algorithm outputs "\emph{accept}" with probability at least 2/3;
\item If $\totalvardist{\p}{\bU}>\dst$, then the algorithm outputs "\emph{reject}" with probability at least 2/3,
\end{itemize}
where $\totalvardist{\p}{\q}\eqdef \sup_{S \subseteq [\ab]} (\p(S)-\q(S)) = \frac{1}{2}\norm{\p-\q}_1 \in [0,1]$ is the total variation distance between arbitrary distributions $\p$ and $\q$ over $[\ab]$.
\end{definition}
A few remarks are in order. First, the error probability of the algorithm has been set to $1/3$ above for simplicity, as common in the literature. It can be decreased to an arbitrary constant $\beta \in (0,1]$ at the cost of a multiplicative $\log(1/\beta)$ factor in the sample complexity (by repeating the test independently and taking the majority outcome).

Second, for ease of exposition, and consistency with previous work, we focus on uniformity testing, i.e., testing with null hypothesis being the uniform distribution $\bU$. However, our results readily extend to the more general task of \emph{identity} testing, where the null hypothesis is an arbitrary (known) reference distribution $\q$. This extension follows either by a straightforward modification of our proofs, or using the known reduction between uniformity and identity testing (cf.~\cite{DiakonikolasK16,Goldreich16}, or~\cite[Appendix~D]{AcharyaCT20b}).

Finally, as often in the distribution testing literature, we will use the so-called ``Poissonisation trick'' and prove the algorithm with a random number $\Pois(n)$ of samples instead of exactly $n$. This modification simplifies some aspects of the analysis, by introducing independence between the number of occurrences of each domain element; and can be done without loss of generality, due to the strong concentration of the Poisson distribution around its mean. We refer the reader to~\cite[Appendix~D.3]{Canonne20} for a short primer on Poissonisation.

\paragraph{Useful tools from previous work (or ``folklore'').} Finally, we recall three key results which our analysis will rely on. The first states that adding properly calibrated Poisson noise to an integer-valued function ensures privacy of the output.\footnote{We note that~\cite{GhaziKMP21} states the theorem with $16\log(10/\delta)$ instead of $16\log(2/\delta)$; but their proof establishes the $16\log(2/\delta)$ bound.}
\begin{lemma}[{Poisson Mechanism~\cite[Theorem~11]{GhaziKMP21}}]
\label{Poisson Mechanism}
Let $f\colon\cX^n\to\Z$ be a $\Delta$-sensitive function, i.e., such that $|f(\vec{x})-f(\vec{x}')|\leq \Delta$ for all neighbouring datasets $\vec{x},\vec{x}'\in\cX^n$. For any $\priv>0$, $\delta\in(0,1]$, and $\lambda \in\R$ such that
\[
    \lambda \geq \frac{16\log(2/\delta)}{(1-e^{-\priv/\Delta})^2} + \frac{2\Delta}{1-e^{-\priv/\Delta}},
\]
the algorithm which, on input $\vec{x}\in\cX^n$, samples $\eta\sim \Pois(\lambda)$ and outputs $f(\vec{x})+\eta$ is $(\priv,\delta)$-DP.
\end{lemma}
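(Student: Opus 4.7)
The plan is to reduce the privacy guarantee to a pointwise comparison of Poisson PMFs and then apply the standard ``bad-event'' argument for approximate differential privacy. First, I would fix neighbouring inputs $\vec{x}, \vec{x}' \in \cX^n$ and set $t \eqdef f(\vec{x}') - f(\vec{x})$, so that $|t| \leq \Delta$. By translation invariance of the Poisson noise, the privacy loss of the mechanism on $\vec x$ versus $\vec x'$ is exactly the privacy loss between $\Pois(\lambda)$ and $\Pois(\lambda) + t$ (after a matching shift of the output space). Assuming without loss of generality that $t \geq 0$ (the case $t < 0$ being symmetric, by swapping the roles of $\vec x$ and $\vec x'$), it then suffices to exhibit a ``bad set'' $B \subseteq \Z$ with $\Pr[\Pois(\lambda) \in B] \leq \delta$ such that the density ratio of $\Pois(\lambda)$ over $\Pois(\lambda)+t$ is bounded by $e^\priv$ outside $B$.

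Second, I would compute the pointwise ratio of Poisson PMFs $q(k)\eqdef e^{-\lambda}\lambda^k/k!$: for $u \geq t$,
\[
\frac{q(u)}{q(u-t)} \;=\; \frac{\lambda^t}{u(u-1)\cdots(u-t+1)} \;\leq\; \frac{\lambda^t}{(u-t+1)^t}\,,
\]
which is at most $e^\priv$ as soon as $u \geq \lambda e^{-\priv/t} + t - 1$. Since $t \mapsto e^{-\priv/t}$ is increasing on $(0, \Delta]$, the tightest constraint is reached at $t = \Delta$, motivating the threshold $\tau \eqdef \lambda e^{-\priv/\Delta} + \Delta - 1$ and the bad event $B \eqdef \{u \in \Z : u < \tau\}$ (which also absorbs the range $\{0,\ldots,t-1\}$, where $q(u-t) = 0$ makes the ratio infinite).

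Third, I would bound $\Pr[\Pois(\lambda) \in B]$ via a standard Poisson lower-tail Chernoff inequality of the form $\Pr[\Pois(\lambda) \leq \lambda - s] \leq \exp(-s^2/(2\lambda))$. The hypothesis $\lambda \geq 2\Delta/(1 - e^{-\priv/\Delta})$ ensures $\lambda - \tau \geq \tfrac{1}{2} \lambda (1 - e^{-\priv/\Delta})$, so the tail is at most $\exp\bigl(-\lambda(1-e^{-\priv/\Delta})^2/8\bigr)$; the dominant term $\lambda \geq 16 \log(2/\delta)/(1-e^{-\priv/\Delta})^2$ then forces this to be at most $\delta/2 \leq \delta$, as required.

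The main obstacle I expect to face is the bookkeeping: tracking constants carefully enough to match the stated bound, and verifying that the ``reverse'' direction of DP (originally $t \leq 0$, or equivalently the constraint $\Pr[\cM(\vec x') \in S] \leq e^\priv \Pr[\cM(\vec x) \in S] + \delta$) is no harder. In that direction the ratio $q(u-t)/q(u)$ blows up only for very large $u$, where a Poisson upper-tail Chernoff bound (at least as strong as the lower-tail one invoked above) closes the argument under the same hypothesis on $\lambda$; the factor of $2$ between $\log(2/\delta)$ and $\log(1/\delta)$ leaves comfortable slack for this second half. The edge case $t=0$ is trivial, since the two distributions coincide.
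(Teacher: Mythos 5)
Your overall strategy is the right one and, up to the issue flagged below, matches what the cited source essentially does: fix a neighbouring pair, translate the output so the comparison is between $\Pois(\lambda)$ and $\Pois(\lambda)$ shifted by $t$ with $|t|\le\Delta$, compute the pointwise PMF ratio, isolate a ``bad set'' where the ratio exceeds $e^\priv$, and show the bad set has Poisson mass at most $\delta$ via a tail bound. Your ratio computation $q(u)/q(u-t)\le \lambda^t/(u-t+1)^t$, the threshold $\tau=\lambda e^{-\priv/\Delta}+\Delta-1$, and the observation that $e^{-\priv/t}$ is increasing in $t$ are all correct; the lower-tail bookkeeping using $\Pr[\Pois(\lambda)\le\lambda-s]\le e^{-s^2/(2\lambda)}$ with $s\ge\tfrac12\lambda(1-e^{-\priv/\Delta})$ is clean and valid. (Note the paper itself does not reprove this lemma; it cites \cite{GhaziKMP21} and records in a footnote that the quoted proof actually gives the $16\log(2/\delta)$ constant. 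So you are reproducing the external argument, not one in the paper.)

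The genuine gap is in your last paragraph, where you handle the other DP direction (ratio $q(u-t)/q(u)$, bad set in the upper tail) by asserting that a Poisson upper-tail Chernoff bound is ``at least as strong as the lower-tail one.'' That claim is false: writing the Chernoff bound as $\Pr[\Pois(\lambda)\ge\lambda+s]\le\exp(-\lambda\,h(s/\lambda))$ with $h(u)=(1+u)\log(1+u)-u$, one has $h(u)\le u^2/2$ for $u>0$, so the upper tail bound is \emph{weaker} than the sub-Gaussian $\exp(-s^2/(2\lambda))$, not stronger — the Poisson is sub-exponential, not sub-Gaussian, on the right. What you actually need there is a Bernstein-type estimate such as $\Pr[\Pois(\lambda)\ge\lambda+s]\le\exp\bigl(-\tfrac{s^2}{2(\lambda+s)}\bigr)$, together with the lower bound $e^{\priv/t}-1\ge 1-e^{-\priv/\Delta}$, which gives a deviation of at least $\tfrac12\lambda(1-e^{-\priv/\Delta})$ on that side as well. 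Splitting on whether the deviation $s'$ exceeds $\lambda$ then shows the exponent is still at least $\log(2/\delta)$ under the stated hypothesis, so the conclusion holds — but this requires the Bernstein form and a short case analysis, and your ``comfortable slack'' intuition is misleading: the $16\log(2/\delta)$ hypothesis is genuinely needed to absorb the factor-of-two loss from the weaker upper-tail bound, not merely the factor $2$ inside the logarithm. Replace the hand-wave with the Bernstein bound and the split $s'<\lambda$ versus $s'\ge\lambda$, and the proof is complete.
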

In particular, note that for $\Delta=1$ and $\priv\in(0,1]$ it suffices to have $\lambda = O(\log(1/\delta)/\priv^2)$. The second fact states that summing a Poisson number of i.i.d.\ Bernoulli random variables ends up simply being a Poisson random variable; we provide a proof for completeness.
\begin{lemma}[Binomial of Poisson is Poisson]
\label{Binomial of Poisson}
Let $\lambda \geq 0$, $p\in[0,1]$. Suppose $(X_i)_{i=1}^\infty$ are i.i.d.\ Bernoulli random variables with parameter $p$, and $N$ is a $\Pois(\lambda)$ random variable independent of the $X_i$'s. Then $\sum_{i=1}^N X_i \sim \Pois(\lambda p)$.
\end{lemma}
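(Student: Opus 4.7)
The plan is to compute the probability mass function of $S \eqdef \sum_{i=1}^N X_i$ directly, by conditioning on $N$. Since $N$ is independent of $(X_i)_{i\geq 1}$, conditional on $N=n$ the random variable $S$ is a $\Bin(n,p)$. Hence, for every integer $k \geq 0$,
\[
    \Pr[S = k] = \sum_{n=k}^\infty \Pr[N = n]\cdot\Pr[\Bin(n,p)=k] = \sum_{n=k}^\infty \frac{e^{-\lambda}\lambda^n}{n!}\binom{n}{k} p^k (1-p)^{n-k}\,.
\]
I would then pull a factor of $(\lambda p)^k/k!$ out of the sum and, after the reindexing $m = n-k$, recognise the remaining series as $\sum_{m=0}^\infty \lambda^m (1-p)^m/m! = e^{\lambda(1-p)}$. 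Combined with the leading $e^{-\lambda}$, this yields $\Pr[S=k] = e^{-\lambda p}(\lambda p)^k/k!$, which is exactly the PMF of $\Pois(\lambda p)$.

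An alternative, arguably slicker proof uses probability generating functions. The PGF of $\Pois(\lambda)$ is $G_N(z) = e^{\lambda(z-1)}$ and the PGF of $\Ber(p)$ is $G_X(z) = 1-p+pz$; by independence of $N$ and the $X_i$'s, the standard compounding formula gives $G_S(z) = G_N(G_X(z)) = e^{\lambda(G_X(z)-1)} = e^{\lambda p(z-1)}$, which is the PGF of $\Pois(\lambda p)$. Uniqueness of the PGF then yields the claim.

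No real obstacle arises here: the only point requiring minor care is justifying the algebraic manipulation of the infinite sum in the first approach (interchange of summation with the Taylor expansion of the exponential), which is immediate since all summands are nonnegative, so Tonelli applies and the exponential series converges absolutely. I would favour presenting the direct PMF computation, since it keeps the argument self-contained and does not require introducing generating functions, but would mention the PGF derivation as a one-line alternative.
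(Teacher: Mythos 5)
Your proposal is correct, and your preferred route (the direct PMF computation) is genuinely different from the paper's. The paper writes $X = \sum_{i=1}^\infty X_i \mathbf{1}_{N\geq i}$ and computes the moment-generating function by conditioning on $N$, recognising the MGF $e^{\lambda p(e^t-1)}$ of $\Pois(\lambda p)$ at the end; this is essentially the same transform argument as your "alternative" PGF derivation (PGF and MGF differ only by the substitution $z = e^t$, and both reduce to the same compounding identity). Your primary approach instead computes $\Pr[S=k]$ directly by conditioning on $N=n$, pulling out $(\lambda p)^k/k!$ and summing the exponential series after reindexing. What each buys: the direct PMF computation is the most elementary and self-contained, requiring nothing beyond manipulation of a nonnegative series, and does not invoke uniqueness of a transform (a fact which, while standard, is not itself trivial). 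The transform route is shorter on the page once the compounding formula is taken for granted, and generalises more gracefully (e.g.\ to non-Bernoulli summands, where the PMF computation would become unwieldy). Both are complete; your note that Tonelli justifies the interchange of sums is the right observation, and in fact the rearrangement you perform is just a reindexing of an absolutely convergent series, so no delicacy arises.
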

\begin{proof}
By definition, $X \eqdef \sum_{i=1}^N X_i = \sum_{i=1}^\infty X_i \mathbf{1}_{N \geq i}$; we can then compute the moment-generating function (MGF) of $X$ as follows, using the expressions of the MGF of a Bernoulli and Poisson distributions, respectively. For $t\in\R$,
\begin{align*}
    \ex{}{e^{t X}} 
    &= \ex{}{\prod_{i=1}^\infty e^{t X_i \mathbf{1}_{N \geq i}}}
    = \ex{}{\condex{}{\prod_{i=1}^\infty e^{t X_i \mathbf{1}_{N \geq i}} }{N}}
    = \ex{}{\prod_{i=1}^\infty \condex{}{e^{t X_i \mathbf{1}_{N \geq i}} }{N}}
    = \ex{}{\prod_{i=1}^N \ex{}{e^{t X_i}}} \\
    &=\ex{}{(1+ p(e^t-1))^N}
    = e^{\lambda((1+ p(e^t-1))-1)}
    = e^{\lambda p(e^t-1)}\,,
\end{align*}
in which we recognize the MGF of a $\Pois(\lambda p)$ random variable.
\end{proof}

Finally, we recall the domain compression technique to reduce sample complexity using public randomness.
\begin{lemma}[Domain Compression Lemma~\cite{AcharyaCT20b,AcharyaCHST20,AminJM20}]
  \label{theo:random:dct:hashing}
  There exist absolute constants $c_1,c_2>0$ such that the following holds. For any $2\leq \ell\leq \ab$ and any distributions $\p,\q$ on $[\ab]$,
  \[
        \proba[\Pi]{ \totalvardist{\p_\Pi}{\q_\Pi} \geq c_1\sqrt{\frac{\ell}{\ab}}\totalvardist{\p}{\q} } \geq c_2\,,
  \] 
  where $\Pi=(\Pi_1,\dots\Pi_\ell)$ is a uniformly random partition of $[\ab]$ in $\ell$ subsets, and $\p_\Pi$ denotes the probability distribution on $[\ell]$ induced by $\p$ and $\Pi$ via $\p_\Pi(i) = \p(\Pi_i)$.
\end{lemma}
\noindent This lets users leverage public randomness to agree on a common random partition of the domain and solve the testing problem on this new domain, trading a smaller domain size (from $\ab$ to $\ell$) for a smaller distance parameter (from $\dst$ to $\dst\sqrt{\ell/\ab}$).
\section{First algorithm: streamlining the argument of~\cite{BalcerCJM21}}
\label{3}

In~\cite{BalcerCJM21}, it was proved that uniformity testing could be performed in a robustly shuffle private manner, with a private-coin protocol. Specifically, the authors established the following:
\begin{theorem}[{\cite[Theorem~4.1]{BalcerCJM21}}]
\label{thm: previous bound}
	Let $\gamma \in (0,1]$, $\priv > 0$, and $\dst, \delta \in (0,1]$. There exists a private-coin protocol which is $(2\priv, 8\delta^\gamma,\gamma)$-robustly shuffle private and, for $\priv = O(1)$ and $\delta = o(1)$, solves $\dst$-uniformity testing with sample complexity
	\[
	\sa = O\Paren{ \frac{\ab^{3/4}}{\dst \priv} \log^{1/2} \frac{\ab}{\delta} + \frac{\ab^{2/3}}{\dst^{4/3}\priv^{2/3}} \log^{1/3} \frac{\ab}{\delta} + \frac{\ab^{1/2}}{\dst^2}  }.
	\]
\end{theorem}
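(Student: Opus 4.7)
The approach is to analyse a mild variant of the BCJM protocol in which, after Poissonising the sample size, every per-bin count delivered to the analyser is \emph{exactly} a Poisson random variable, independently across bins. This two-line distributional observation~--~a direct consequence of \cref{Binomial of Poisson} and the sum-stability of Poisson distributions~--~sidesteps the four-part decomposition of~\cite{BalcerCJM21} and reduces the sample-complexity analysis to a single Chebyshev calculation, while the privacy analysis factorises across bins via \cref{Poisson Mechanism}.

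\emph{Algorithm and distributional collapse.} I would set $\lambda \eqdef C_0 \log(1/\delta)/\priv^2$ for a large absolute constant $C_0$, Poissonise the sample count to $\Pois(\sa)$, and have each user $i$ send to the shuffler one ``real'' message labelled $x_i$ together with, independently for each $j \in [\ab]$, $\Pois(\lambda/\sa)$ ``dummy'' $j$-labelled messages. The analyser reads off the per-bin counts $M_j$ from the shuffler output, computes
\[
Z \eqdef \sum_{j=1}^\ab \Paren{(M_j - \mu)^2 - M_j}, \qquad \mu \eqdef \sa/\ab + \lambda,
\]
and rejects when $Z \geq T$ for a threshold $T$ of order $\sqrt{\ab}\,\mu$. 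By \cref{Binomial of Poisson} applied to the Poissonised sample, the real $j$-count is $\Pois(\sa p_j)$; by Poisson-sum stability, the dummy $j$-count is $\Pois(\lambda)$; and standard Poissonisation makes these independent across $j$. Hence $M_j \sim \Pois(\tilde\mu_j)$ with $\tilde\mu_j \eqdef \sa p_j + \lambda$, independently across $j \in [\ab]$.

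\emph{Moments and Chebyshev.} The factorial-moment identities for Poisson random variables give $\E{}{(M_j - \mu)^2 - M_j} = (\tilde\mu_j - \mu)^2$ and $\Var{}{(M_j - \mu)^2 - M_j} = 2\tilde\mu_j^2 + 4(\tilde\mu_j - \mu)^2 \tilde\mu_j$; summing by independence, $\E{}{Z} = \sa^2 \|\p - \bU\|_2^2$ (which vanishes under $H_0$ and is at least $4\sa^2 \dst^2/\ab$ under $H_1$, by Cauchy-Schwarz) and $\Var{}{Z} = 2\sum_j \tilde\mu_j^2 + 4 \sum_j \delta_j^2 \tilde\mu_j$ with $\delta_j \eqdef \tilde\mu_j - \mu$. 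Two identities close the variance analysis: $\sum_j \tilde\mu_j^2 = \ab\mu^2 + \E{}{Z}$ (since $\sum_j \delta_j = 0$), and Cauchy-Schwarz with $\sum_j \delta_j^4 \leq (\max_j \delta_j^2)\,\E{}{Z} \leq (\E{}{Z})^2$ gives $\sum_j \delta_j^2 \tilde\mu_j \leq \E{}{Z}\sqrt{\ab\mu^2 + \E{}{Z}}$. A two-sided Chebyshev then requires $\E{}{Z} \gtrsim \sqrt{\Var{}{Z}|_{H_0}} + \sqrt{\Var{}{Z}|_{H_1}}$; the bounds above together with an AM-GM collapse this into the single condition $\E{}{Z} \gtrsim \sqrt{\ab}\,\mu$, i.e.\ $\sa^2 \dst^2/\ab \gtrsim \sa/\sqrt{\ab} + \sqrt{\ab}\,\lambda$, which rearranges to $\sa = \Omega\Paren{\sqrt{\ab}/\dst^2 + \ab^{3/4}\sqrt{\log(1/\delta)}/(\dst\priv)}$. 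This implies the three-term bound of the theorem since its middle term has the form $A^{2/3}C^{1/3}$ for the first and third terms $A, C$, and is hence dominated by $A + C$ via AM-GM.

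\emph{Privacy and the main obstacle.} Changing one user's sample modifies at most two bin-counts by $\pm 1$; with $\gamma\sa$ honest users, the noise added to each bin is $\Pois(\gamma\lambda)$, so for $C_0$ large enough \cref{Poisson Mechanism} gives $(\priv, 4\delta^\gamma)$-DP per bin-count, and basic composition across the (at most) two affected bins yields the required $(2\priv, 8\delta^\gamma)$-DP of the shuffler's output, uniformly in $\gamma \in (0, 1]$. The main obstacle is the $H_1$ variance bound: the crude inequality $\sum_j \delta_j^2 \tilde\mu_j \leq (\max_j \tilde\mu_j)\,\E{}{Z}$ would cost a multiplicative factor of $\sa$ and leave a suboptimal $\ab/\dst^2$ term in the sample complexity. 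Routing through Cauchy-Schwarz against $\sqrt{\sum_j \delta_j^4}$ and exploiting $\max_j \delta_j^2 \leq \sum_j \delta_j^2$ avoids any spurious dependence on $\|\p\|_\infty$ or $\|\p - \bU\|_\infty$; this closed-form control of fourth central moments is only possible because each $M_j$ is exactly Poisson~--~the payoff of the distributional collapse afforded by \cref{Binomial of Poisson}.
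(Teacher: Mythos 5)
Your proposal does not actually reprove \cref{thm: previous bound} as stated (which is simply cited from~\cite{BalcerCJM21}); instead it proves the strictly stronger bound given in the paper's own~\cref{thm: improved bound}, namely $O(k^{3/4}\sqrt{\log(1/\delta)}/(\alpha\priv) + \sqrt{k}/\alpha^2)$, and then observes (correctly) that the cited three-term bound follows, since the middle term is the weighted geometric mean $A^{2/3}C^{1/3}$ of the outer two. In that sense the approach is essentially the same as the paper's streamlined analysis: Poissonisation plus~\cref{Binomial of Poisson} makes every per-bin count exactly Poisson and independent across bins, the first two moments of the chi-square-type statistic are computed in closed form, the cross term $\sum_j \delta_j^2\tilde\mu_j$ is tamed by Cauchy--Schwarz against $\sqrt{\sum_j \delta_j^4}$ and $\ell_p$ monotonicity, and Chebyshev finishes. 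Your explicit factorisation $\sum_j\tilde\mu_j^2 = k\mu^2 + \ex{}{Z}$ and the per-bin variance $2\tilde\mu_j^2 + 4\delta_j^2\tilde\mu_j$ are correct, and the final AM--GM collapse to $\ex{}{Z}\gtrsim\sqrt{k}\,\mu$ is sound.

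Two small points worth flagging. First, your randomiser differs from~\cref{algo:1}: you send a single real $j$-labelled message plus pure dummy $j$-labelled messages, whereas the paper sends one bit-indicator message $(j,\ind{x=j})$ for every $j\in[\ab]$ plus noise messages with fresh random bits. The resulting per-bin count is $\Pois(\sa\p_j+\lambda)$ in yours versus $\Pois(\sa\p_j+\lambda/2)$ in the paper's; this is only a constant-factor reparametrisation of $\lambda$, and your variant is arguably cleaner since it avoids the Bernoulli-thinning step. Second, you wrote that each user draws $\Pois(\lambda/\sa)$ dummy messages, with $\sa$ the \emph{expected} sample size; summing over a $\Pois(\sa)$ number of users yields a compound Poisson, not an exact Poisson. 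The paper instead uses $\Pois(\lambda/\rsa)$ with $\rsa$ the \emph{realised} Poissonised count (treated as known to the users), which makes the total dummy count exactly $\Pois(\lambda)$ conditionally on $\rsa$. You should adopt that convention, or your ``distributional collapse'' claim is only asymptotic rather than exact.
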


We first observe that the \emph{statement} can be simplified. Indeed, the second term given in the sample complexity is redundant, as it is always dominated by one of the other two. To see why, suppose that the middle term dominates the first, i.e., $\ab^{\frac{2}{3}}\priv^{-\frac{2}{3}}\dst^{-\frac{4}{3}}\log^{\frac{1}{3}}(\ab/\delta)>\ab^{\frac{3}{4}}\priv^{-1}\dst^{-1}\log^{\frac{1}{2}}(\ab/\delta)$. Then $\dst^{-\frac{1}{3}} > \ab^{\frac{1}{12}} \priv^{-\frac{1}{3}}\log^{\frac{1}{6}}(\ab/\delta)$, which implies that $\dst^{-\frac{2}{3}} > \ab^{\frac{1}{6}} \priv^{-\frac{2}{3}}\log^{\frac{1}{3}}(\ab/\delta)$. That is, $\ab^{\frac{1}{2}}\dst^{-2} > \ab^{\frac{2}{3}}\priv^{-\frac{2}{3}}\dst^{-\frac{4}{3}}\log^{\frac{1}{3}}(\ab/\delta)$, and thus the third term dominates.

We then greatly simplify the \emph{proof} of the result. That is, we will provide a significantly simpler analysis of the algorithm, which provides the same privacy guarantee and, as a byproduct, removes the logarithmic dependence on $\ab$.\footnote{Our argument also yields explicit constants which, even without trying to optimise them, are considerably smaller than the ones from the original proof.}\medskip

\noindent\textbf{General setup.} The setup is similar to that of~\cite{BalcerCJM21}: we will again employ the \emph{Poissonisation} trick by using $\rsa \sim \Pois(\sa)$ samples instead of $\sa$, so that the counts of every element in the sample is independent from each other; we assume that the resulting value of $\rsa$ is known to the users and analyser. Each user will receive one sample from $\p$ and the samples are i.i.d. Let $\vec{x}$ denote the vector of inputs from users and $c_j(\vec{x})$ the actual count of $j \in [\ab]$ in $\vec{x}$. By Poissonisation, we have that $c_j(\vec{x}) \sim \Pois(\sa \p_j)$.

The randomiser $\mathcal{R}_{UT}$ (\cref{algo:1}) will be the same as that in \cite{BalcerCJM21} but our analyser (\cref{algo:2}) will be slightly different (modified for ease of computation). They are both given below, together constituting our private-coin robustly shuffle private uniformity tester $\mathcal{P}_{UT}=(\mathcal{R}_{UT},\mathcal{A}_{UT})$.

\begin{algorithm}[H]
\begin{algorithmic}[2]
\Require User data point $x\in[\ab]$, parameter $\lambda>0$, number of users $\rsa~\sim\Pois(\sa)$
\State Initialise the vector of messages $\vec{y} \gets \emptyset$
\For{$j\in[k]$}
\LineComment{Append an ``informative'' message to the tuple of messages, $(j,0)$ or $(j,1)$ depending on $x$}
\State $\vec{y} \gets \vec{y}\sqcup \{(j,\ind{x=j})\}$
\LineComment{Choose a random number of ``noise'' messages to append, each randomly $(j,0)$ or $(j,1)$}
\State Draw $s_j \sim \Pois(\lambda/ \rsa)$
\For{$t \in [s_j]$}
\State Draw $b_{j,t} \sim \Ber{1/2}$
\State Append $(j,b_{j,t})$ to $\vec{y}$
\EndFor
\EndFor
\Return{$\vec{y}$}
\end{algorithmic}
\caption{$\cR_\ut$, a randomiser for private uniformity testing}
\label{algo:1}
\end{algorithm}

\begin{algorithm}[H]
\begin{algorithmic}[2]
    \Require A message vector $\vec{y}\in([\ab]\times\{0,1\})^\ast$, parameter $\lambda>0$, number of users $\rsa~\sim\Pois(\sa)$
    \State Compute the ``noisy reference'' $\mu \gets \frac{\sa}{\ab} + \frac{\lambda}{2}$
    \For{$j\in[\ab]$}
        \State Compute count of $j$ as $N_j(\vec{y}) \gets |\{ i : y_i = (j,1)\}|$
    \EndFor
    \State Compute statistic 
    \begin{equation}
        \label{eq:def:statistic:Z}
        Z \gets \frac{\ab}{\sa} \sum_{j=1}^\ab \mleft( (N_j(\vec{y}) - \mu)^2 - N_j(\vec{y}) \mright)
    \end{equation}
    \If{ $Z > 2\sa\dst^2$ }
        \Return ``not uniform''
    \Else\ 
        \Return ``uniform''
    \EndIf
\end{algorithmic}
    \caption{$\cA_{\rm{}UT}$, an analyser for private uniformity testing}
    \label{algo:2}
\end{algorithm}

We are now ready to state our result.
\begin{theorem}
\label{thm: improved bound}
Let $\gamma \in (0,1]$, $\priv > 0$, and $\dst, \delta \in (0,1)$. There exists $\lambda=\lambda(\priv,\delta) \in \R$ such that the protocol $\cP_\ut = (\cR_\ut,\cA_\ut)$ is $(2\priv, 4\delta^\gamma,\gamma)$-robustly shuffle private. For $\priv = O(1)$ and $\delta = o(1)$, $\cP_\ut$ solves $\dst$-uniformity testing with sample complexity
	\[
	  \sa  = O\Paren{ \frac{\ab^{3/4}}{\dst \priv} \log^{1/2}\frac{1}{\delta} + \frac{\ab^{1/2}}{\dst^2}  }
	\]
\end{theorem}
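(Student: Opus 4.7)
The plan is to leverage \cref{Binomial of Poisson} to show that every random quantity in \cref{algo:2}, despite being the output of a complicated randomisation process, has a clean Poisson distribution; once this is in hand, correctness reduces to a second-moment computation followed by a single invocation of Chebyshev, and privacy follows immediately from \cref{Poisson Mechanism}. Concretely, I would first establish that, unconditionally, $N_j(\vec{y}) \sim \Pois(\mu_j)$ with $\mu_j \eqdef \sa \p_j + \lambda/2$, and that the counts $(N_j)_{j \in [\ab]}$ are mutually independent. For a fixed $j$, conditional on $\rsa$, the total number of noise messages in category $j$ is $\Pois(\rsa \cdot \lambda/\rsa) = \Pois(\lambda)$ by Poisson convolution, and since each such message is independently flagged with a $1$ with probability $\tfrac{1}{2}$, \cref{Binomial of Poisson} identifies the count of $(j,1)$ noise messages as $\Pois(\lambda/2)$---a distribution that in fact does not depend on $\rsa$. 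Summing with the informative contribution $c_j(\vec x) \sim \Pois(\sa \p_j)$ and invoking standard Poissonisation yields $N_j \sim \Pois(\mu_j)$, with independence across $j$.

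Writing $\mu = \sa/\ab + \lambda/2$ and $\Delta_j \eqdef \mu_j - \mu = \sa(\p_j - 1/\ab)$, and using the first four moments of a Poisson random variable, a short calculation gives, for each $j$,
\[
    \mathbb{E}\bigl[(N_j - \mu)^2 - N_j\bigr] = \Delta_j^2, \qquad \operatorname{Var}\bigl((N_j - \mu)^2 - N_j\bigr) = 2\mu_j^2 + 4\mu_j \Delta_j^2.
\]
Summing over $j$ and using $\sum_j \Delta_j = 0$ gives $\mathbb{E}[Z] = \ab \sa \|\p - \bU\|_2^2$, which vanishes under the null and, by Cauchy--Schwarz applied to $\|\p - \bU\|_1 > 2\dst$, is at least $4\sa \dst^2$ under the alternative; the threshold $2\sa\dst^2$ thus sits safely between the two means. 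A Chebyshev argument at that threshold reduces correctness to showing $\operatorname{Var}(Z) \leq \max(4\sa^2\dst^4/3,\mathbb{E}[Z]^2/12)$ under both hypotheses. Under the null, $\operatorname{Var}_{\bU}(Z) = 2\ab^3 \mu^2 / \sa^2$, which rewrites as the single constraint $\mu = O(\sa^2 \dst^2 / \ab^{3/2})$; splitting $\mu$ into its two summands this becomes $\sa = \Omega(\sqrt{\ab}/\dst^2)$ together with $\lambda = O(\sa^2 \dst^2 / \ab^{3/2})$. Under the alternative, expanding $\sum_j \mu_j^2 = \ab\mu^2 + \sum_j \Delta_j^2$ and bounding the cross-term $\sum_j \mu_j \Delta_j^2$ via a heavy/light decomposition on the coordinates of $\p$ shows that the extra contributions are controlled by either the null variance (in the mild regime where $M \eqdef \mathbb{E}[Z]$ is close to its minimum $4\sa\dst^2$) or by $M^2$ (in the extreme regime where $M \gg \sa\dst^2$), which is exactly what the two branches of the Chebyshev bound require.

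Privacy then follows directly from \cref{Poisson Mechanism}. Conditional on the data-independent total of messages per category, the shuffled output is determined by the counts $(N_j)_{j \in [\ab]}$, each of which is a $1$-sensitive function of $\vec x$ plus independent $\Pois(\lambda/2)$ noise, so choosing $\lambda = \Theta(\log(1/\delta)/\priv^2)$ renders each count $(\priv,\delta)$-DP. Altering one user's input changes $c_j$ for at most two values of $j$, so basic composition gives $(2\priv, 2\delta)$-DP of the full shuffled output. For the robust variant with honest fraction $\gamma$, the effective noise reduces to $\Pois(\gamma \lambda/2)$; plugging $\bar\delta = 2\delta^\gamma$ and $\bar\priv = \priv$ into \cref{Poisson Mechanism} yields $(\priv, 2\delta^\gamma)$-DP per count using the same $\lambda$, hence $(2\priv, 4\delta^\gamma,\gamma)$-robust shuffle privacy after composing over the (at most two) affected categories. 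Substituting this $\lambda$ into the constraint $\lambda = O(\sa^2\dst^2/\ab^{3/2})$ yields $\sa = \Omega(\ab^{3/4}\sqrt{\log(1/\delta)}/(\priv\dst))$, which combined with $\sa = \Omega(\sqrt{\ab}/\dst^2)$ matches the claimed sample complexity.

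The main technical obstacle I anticipate lies in the variance bound under the alternative, specifically controlling $\sum_j \mu_j \Delta_j^2$ uniformly over all $\p$ at TV distance $>\dst$ from uniform: a crude bound $\sum_j \mu_j \Delta_j^2 \leq (\max_j \mu_j) \sum_j \Delta_j^2$ is too loose for ``near-uniform'' alternatives, and it seems necessary to split coordinates by whether $p_j$ is above or below some multiple of $1/\ab$---the heavy part contributes to the $M^2$ branch of Chebyshev while the light part reduces to (a multiple of) the null variance. Once this dichotomy is handled, everything else---the expectation computation, the noise analysis, and the privacy composition---is a one-line invocation of the two background lemmas.
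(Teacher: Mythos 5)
Your proposal lands on essentially the same skeleton as the paper's proof: the same observation that $N_j\sim\Pois(\sa\p_j+\lambda/2)$ (independent across $j$) via the Binomial-of-Poisson lemma, the same per-coordinate moment formulas $\ex{}{(N_j-\mu)^2-N_j}=\Delta_j^2$ and $\Var{}{(N_j-\mu)^2-N_j}=2\mu_j^2+4\mu_j\Delta_j^2$, the same Chebyshev argument at $\tau=2\sa\dst^2$, and the same privacy route (Poisson Mechanism on each count, composition over the two affected coordinates, with $\gamma\lambda/2$ replacing $\lambda/2$ for robustness). The one place you diverge --- and, as you rightly flag, the only delicate step --- is the cross-term $\sum_j\mu_j\Delta_j^2$. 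The paper does \emph{not} split coordinates into heavy and light; instead, following the $\chi^2$-analysis of~\cite{AcharyaDK15}, it applies a single Cauchy--Schwarz to get $\sum_j(\p_j+\tfrac{\lambda}{2\sa})(\p_j-\tfrac1\ab)^2 \le \big(\sum_j(\p_j+\tfrac{\lambda}{2\sa})^2\big)^{1/2}\,\norm{\p-\bU}_4^2$, bounds $\norm{\p-\bU}_4^2\le\norm{\p-\bU}_2^2$ by monotonicity of $\ell_p$ norms, and expands $\sum_j(\p_j+\tfrac{\lambda}{2\sa})^2=\norm{\p-\bU}_2^2+\ab\mu^2/\sa^2$ using $\sum_j(\p_j-\tfrac1\ab)=0$. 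This yields a closed-form bound
\[
\Var{}{Z} \le \frac{2\ab^3\mu^2}{\sa^2} + \Paren{\frac{2\ab}{\sa}+\frac{4\ab^{3/2}\mu}{\sa}}\ex{}{Z} + \frac{4\ab^{1/2}}{\sa^{1/2}}\ex{}{Z}^{3/2},
\]
and dividing by $\ex{}{Z}^2$ and using $\ex{}{Z}>4\sa\dst^2$ reduces the far-case Chebyshev check to the single-variable inequality $2t^4+12t^2+12t\le 1/3$ with $t=\ab^{3/4}\mu^{1/2}/(\sa\dst)$, giving an explicit constant. Your proposed heavy/light decomposition is a plausible alternative (it appears in other $\chi^2$-tester analyses), but as written it is only a sketch and would introduce a case analysis that the Cauchy--Schwarz route avoids entirely; in particular, the claim that the light part ``reduces to a multiple of the null variance'' would need a real argument. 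Aside from that one step, your computations and the final bookkeeping (splitting $\mu=\sa/\ab+\lambda/2$ to recover the two-term sample complexity) match the paper.
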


\begin{proof} We establish separately the privacy and accuracy guarantees, starting with the former. Specifically, we want to prove the \emph{robust} privacy statement; throughout, we set the parameter $\lambda$ to
\begin{equation}
    \label{eq:lambda}
    \lambda(\priv,\delta)\eqdef \frac{64\log(2/\delta)}{(1-e^{-\priv})^2} = O\Paren{\frac{\log(1/\delta)}{\priv^2}}
\end{equation}

\noindent\textbf{Privacy.} Fix any $\gamma \in(0,1]$. By the same argument as in~\cite[Theorem 4.2]{BalcerCJM21}, we first restrict our attention to a single element j and by composition, in order to prove $(2\priv,4\delta^{\gamma})$-privacy of $(\cS\circ\cR_\ut^{\gamma \rsa})$, it suffices to show $(\priv,2\delta^{\gamma})$-shuffle privacy of messages containing j. To that end we only need to show $(\priv,2\delta^{\gamma})$-DP of \[c_j(\vec{x})+\sum_{i=1}^{\gamma \rsa}\Bin\mleft(\Pois\mleft(\frac{\lambda}{\rsa}\mright), 1/2\mright) = c_j(\vec{x}) + \Bin\mleft(\Pois\mleft(\gamma \lambda\mright), 1/2\mright)\]
since each $(j,1)$ has distribution $\Pois(\frac{\lambda}{\rsa})$.

By~\cref{Binomial of Poisson}, we know that 
\[
    \Bin\mleft(\Pois\mleft(\gamma \lambda\mright), 1/2\mright) = \Pois\mleft(\gamma\lambda/2\mright) 
\]
By~\cref{Poisson Mechanism}, to achieve $(\priv(\gamma),\delta(\gamma))$-DP, it suffices to have
\[ 
\frac{\gamma\lambda}{2}\geq \frac{16\log(2/\delta(\gamma))}{(1-e^{-\priv(\gamma)})^2} + \frac{2}{1-e^{-\priv(\gamma)}}\;
\]
which is satisfied for $\delta(\gamma) \eqdef 2^{1-\gamma} \delta^{\gamma} \leq 2\delta^\gamma$ and $\priv(\gamma)=\priv$, since then
\begin{align*}
    \frac{2}{1-e^{-\priv(\gamma)}}+\frac{16\log(2/\delta(\gamma))}{(1-e^{-\priv(\gamma)})^2}
    &\leq 2\cdot \frac{16\log(2/\delta(\gamma))}{(1-e^{-\priv})^2} 
    =\frac{\gamma\lambda}{2}\,.
\end{align*}
This establishes the privacy part of the theorem.\smallskip

\noindent\textbf{Sample Complexity.} We will show that the statistic $Z$ is smaller than our threshold $\tau\eqdef 2\sa\dst^2$ with constant probability when $\p=\bU$; and $Z$ is greater than $\tau=2\sa\dst^2$ with constant probability when $\totalvardist{\p}{\bU}>\dst$. To do so, we will compute the expectation and variance of $Z$ in both cases, and show that the ``gap'' between the two expectations (which will be $2\tau$) is much larger than the standard deviations; this will allow us to conclude by Chebyshev's inequality.

We implement this roadmap in the next 3 lemmas, which bound the first two moments of $Z$.
\begin{lemma}[Expectation of the statistic]
\label{mean of Z}
    Let $Z$ be defined as in~\eqref{eq:def:statistic:Z}, when $\rsa\sim\Pois(\sa)$ and $\vec{x}\sim \p^{\otimes \rsa}$. Then
    \[
        \ex{}{Z} = \sa \ab \norm{\p-\bU}_2^2
    \]
    In particular, (1)~if $\p=\bU$ then $\ex{}{Z}=0$, and (2)~if $\totalvardist{\p}{\bU}>\dst$ then $\ex{}{Z} > 4\sa\dst^2$.
\end{lemma}
\begin{proof}
Fix any $j\in[\ab]$. Define the ``noise scale'' $\ell_j \eqdef -\rsa + |\{ i : y_i \in \{(j,0),(j,1)\}\}|$, and note that $\ell_j \sim \Pois(\lambda)$.
We have $N_j(\vec{y}) = c_j(\vec{x}) + \eta_j$, where $\eta_j\sim \Bin(\ell_j, \frac{1}{2})$ is independent of $c_j(\vec{x})$, and $c_j(\vec{x})\sim \Pois(\sa \p_j)$. %
By Lemma \ref{Poisson Mechanism} we get that $\eta_j \sim \Pois(\frac{1}{2}\lambda)$, and therefore overall that $N_j(\vec{y})\sim \Pois(\sa \p_j+\frac{1}{2}\lambda)$.
Moreover, $N_1(\vec{y}),\dots,N_\ab(\vec{y})$ are independent, since the $\ell_j,c_j(\vec{x})$'s are.
It is a simple matter to verify that, for a Poisson random variable,
\[
    \ex{M\sim\Pois(a)}{ (M - \mu)^2 - M  } = (a-\mu)^2\,;
\]
from which, by linearity of expectation, 
\[
        \ex{}{Z} = \frac{\ab}{\sa} \sum_{j=1}^\ab \mleft( \sa \p_j+\frac{1}{2}\lambda -\mu \mright)^2
        = \sa\ab \sum_{j=1}^\ab \mleft( \p_j - \frac{1}{\ab} \mright)^2
        = \sa \ab \norm{\p-\bU}_2^2\,,
\]
as claimed, where we used our definition of $\mu = \frac{\sa}{\ab} + \frac{\lambda}{2}$.
To conclude, note that if $\p=\bU$ then $\norm{\p-\bU}_2^2 = 0$; while if $\tv{\p}{\bU} > \dst$ then
 \[
    \ab\norm{\p-\bU}_2^2 
    \geq \norm{\p-\bU}_1^2
    = 4\totalvardist{\p}{\bU}^2 > 4\dst^2 \,,
\]
the first inequality being Cauchy--Schwarz.
\end{proof}
\begin{lemma}[Variance in the uniform case]
\label{var of Z when uniform}
    Let $Z$ be defined as in~\eqref{eq:def:statistic:Z}, when $\rsa\sim\Pois(\sa)$ and $\vec{x}\sim \bU^{\otimes \rsa}$. Then
    \[
        \Var{}{Z} = \frac{2\ab^3}{\sa^2}\cdot \mu^2
    \]
\end{lemma}
\begin{proof}
By independence of the $N_1(\vec{y}),\dots,N_\ab(\vec{y})$'s, we have
\[
    \Var{}{Z} = \frac{\ab^2}{\sa^2}\sum_{j=1}^\ab\Var{}{\mleft( (N_j(\vec{y}) - \mu)^2 - N_j(\vec{y}) \mright)} = \frac{\ab^3}{\sa^2}  \Var{}{\mleft( (N_1(\vec{y}) - \mu)^2 - N_1(\vec{y}) \mright)} 
\]
the second equality since, in the uniform case, all $N_j(\vec{y})$'s are i.i.d.\ Poisson with parameter $\mu$. One can again check that 
\[
    \ex{M\sim\Pois(\mu)}{ \mleft( (M - \mu)^2 - M \mright)^2 } = 2\mu^2\,;
\]
proving the result.
\end{proof}
We finally provide the last piece of the puzzle, bounding the variance in the non-uniform case. We note that while~\cref{var of Z when uniform} directly follows from the general case below (plugging in $\ex{}{Z}=0$), its proof illustrated the key ideas and the simplicity of the approach, hence our choice to make it a standalone lemma.
\begin{lemma}[Variance in the ``far'' case]
\label{var of Z when far}
    Let $Z$ be defined as in~\eqref{eq:def:statistic:Z}, when $\rsa\sim\Pois(\sa)$ and $\vec{x}\sim \p^{\otimes \rsa}$. Then, 
    \[
        \Var{}{Z} \leq \frac{2\ab^3\mu^2}{\sa^2} + \Paren{ \frac{2\ab}{\sa}+\frac{4\ab^{3/2}\mu}{\sa} } \ex{}{Z} + \frac{4\ab^{1/2}}{\sa^{1/2}} \ex{}{Z}^{3/2}
    \]
\end{lemma}
The statement follows from computations analogous to those from~\cite[Appendices~A,B]{AcharyaDK15}, involving (again) the moments of Poisson random variables along with the Cauchy--Schwarz inequality and monotonicity of $\ell_p$ norms. In the interest of conciseness, we provide it in~\cref{app:technical}.

With those lemmas in hand, we can bound the sample complexity as a function of the various parameters:
\begin{lemma}
\label{bound for n}
There exists $C>0$ such that the following holds. If $\sa \geq C\cdot \frac{\ab^{3/4}\mu^{1/2}}{\dst}$, then $\proba{Z \geq \tau}\leq \frac{1}{3}$ when $\p=\bU$ and $\proba{ Z \leq \tau} \leq \frac{1}{3}$ when $\totalvardist{\p}{\bU} > \dst$. (Moreover, one can take $C=40$.)
\end{lemma}
\begin{proof}
Consider first the case $\p=\bU$. Then, by~\cref{mean of Z,var of Z when uniform} and Chebyshev's inequality, we have that 
\[
    \proba{Z > \tau} \leq \frac{\Var{}{Z}}{\tau^2} = \frac{2\ab^3\mu^2}{\sa^2\cdot (2\sa\dst)^2} \leq \frac{1}{3}
\]
the last inequality as long as $\sa \geq \sqrt[4]{\frac{3}{2}}\cdot \frac{\ab^{3/4}\mu^{1/2}}{\dst}$.

Turning to the case $\totalvardist{\p}{\bU} > \dst$, we have $\ex{}{Z} > 2\tau$ by~\cref{mean of Z}. Thus, from~\cref{var of Z when far} and Chebyshev,
\begin{align*}
    \proba{Z \leq \tau} \leq \proba{|Z-\ex{}{Z}| \geq \frac{1}{2}\ex{}{Z}}
    &\leq \frac{4\Var{}{Z}}{\ex{}{Z}^2} \\
    &\leq \frac{8\ab^3\mu^2}{\sa^2\ex{}{Z}^2} 
    + 4\Paren{ \frac{2\ab}{\sa}+\frac{4\ab^{3/2}\mu}{\sa}}\frac{\ex{}{Z}}{\ex{}{Z}^2}
    + 16\frac{\ab^{1/2}}{\sa^{1/2}}\frac{\ex{}{Z}^{3/2}}{\ex{}{Z}^{2}}\\
    &\leq \frac{2\ab^3\mu^2}{\sa^4\dst^4} 
    + \frac{4\ab}{\sa^2\dst^2}+\frac{8\ab^{3/2}\mu}{\sa^2\dst^2}
    + \frac{16\ab^{1/2}}{\sqrt{2}\sa\dst}\\
    &\leq \frac{2\ab^3\mu^2}{\sa^4\dst^4} 
    + \frac{12\ab^{3/2}\mu}{\sa^2\dst^2}
    + \frac{12\ab^{3/4}\mu^{1/2}}{\sa\dst}
\end{align*}
where in the last inequality we used $\mu\geq 1$ and $16/\sqrt{2}\leq 12$ to slightly loosen the upper bound on the second and fourth term, for convenience. Indeed, setting $t \eqdef \frac{\ab^{3/4}\mu^{1/2}}{\sa\dst}$, the RHS is now $2t^4+12t^2+12t$, which after some calculus is less than $1/3$ for $0\leq t \leq 1/40$. Thus, having $\sa \geq 40\cdot \frac{\ab^{3/4}\mu^{1/2}}{\dst}$ suffices to ensure $\proba{Z \leq \tau} \leq 1/3$ in this case. Taking $C = \max(40,\sqrt[4]{3/2})=40$ concludes the proof.
\end{proof}
To conclude the proof of the theorem, it only remains to substitute our choice of $\mu= \frac{\sa}{\ab}+\frac{\lambda}{2}$ and $\lambda=O\Paren{\log(1/\delta)/\priv^2}$ (\cref{eq:lambda}) in the bound of~\cref{bound for n}. This leads to the sufficient condition
\[
 \sa \geq C'\cdot \frac{\ab^{3/4}}{\alpha}\Paren{ \frac{\sa^{1/2}}{\ab^{1/2}} + \frac{\sqrt{\log(1/\delta)}}{\priv} }
\]
for some constant $C'>0$, yielding the claimed sample complexity $\sa  = O\Paren{ \frac{\ab^{3/4}}{\dst \priv} \log^{1/2}\frac{1}{\delta} + \frac{\ab^{1/2}}{\dst^2}  }$.
\end{proof}

%
\section{Second algorithm: from local to shuffle privacy}
\label{4}
We turn to our second algorithm for shuffle private uniformity testing, which will enjoy the same guarantees as the first, and be conceptually remarkably simple. Indeed, the gist of the algorithm is as follows: (1)~take a \emph{locally private} uniformity testing algorithm with optimal sample complexity $O(\ab^{3/2}/(\dst^2e^\priv))$ in the \emph{low privacy} regime ($\priv \gg 1$); (2)~make all $\sa$ users use this randomiser, with privacy parameter set to $\priv_{L}\approx\log(\priv^2\sa/\log(1/\delta))$. Then, a recent result of~\cite{FeldmanMT20} on ``privacy amplification via shuffling'' will guarantee that the resulting shuffle protocol is $\priv$-DP, and showing that the resulting sample complexity matches~\eqref{eq:sc:privatecoin:better} will only take a few lines.

Unfortunately for this plan, the first ingredient (1) is missing from the existing literature; we thus have to provide it ourselves. The resulting optimal LDP algorithm is itself quite simple, both conceptually and algorithmically; and we believe it to be of independent interest. The reader will find its description and analysis in~\cref{ssec:4ldp}; we for now assume (1), and show in~\cref{ssec:4shuffle} how to use it to obtain a shuffle private tester.

\subsection{Robust shuffle privacy by privacy amplification}
    \label{ssec:4shuffle}
    
We begin by stating the key result our analysis will rely on, here instantiated to our setting of non-interactive, private-coin protocols.
\begin{theorem}[{\cite[Theorem~3.1]{FeldmanMT20}}]\label{higheps0}
 For any domain $\cX$, let $\cR\colon\cX\to\cY$ be an $\priv_{L}$-DP local randomiser; and let $\cS$ be the algorithm that given a tuple of $\sa$ messages $\vec{y}\in\cY^\sa$, samples a uniform random permutation $\pi$ over $[\sa]$ and outputs $(\vec{y}_{\pi(1)},\dots,\vec{y}_{\pi(\sa)})$.
 Then for any $\delta\in(0,1]$ such that $\priv_{L}\le\log\frac{\sa}{16\log(2/\delta)}$, $\cS\circ\cR^\sa$ is $(\priv, \delta)$-DP,  where
\[
\priv\le \log\Paren{1+8\frac{e^{\priv_{L}}-1}{e^{\priv_{L}}+1}\left(\sqrt{\frac{e^{\priv_{L}}\log(4/\delta)}{\sa}}+\frac{e^{\priv_{L}}}{n}\right) }\,.
\]
In particular, if $\priv_{L}\geq 1$ then $\priv = O\Paren{\sqrt{e^{\priv_{L}}\log(1/\delta)/\sa}}$, and if $\priv_{L} < 1$ then $\priv = O\Paren{\priv_{L}\sqrt{\log(1/\delta)/\sa}}$.
\end{theorem}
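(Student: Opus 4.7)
The plan is to reduce the analysis of an arbitrary $\priv_{L}$-DP local randomiser $\cR$ to that of a canonical binary mechanism via a \emph{privacy blanket} decomposition, and then to bound the hockey-stick divergence between two shifted binomial distributions. The starting point is that any $\priv_{L}$-DP randomiser $\cR$ can be written, for every input $x\in\cX$, as a mixture
\[
\cR(x) \;=\; (1-q)\,\nu \;+\; q\,\cR'(x),
\]
where $\nu$ is a distribution on $\cY$ independent of $x$ and $q\le 1-e^{-\priv_{L}}$. In words, each user, upon receiving its input, flips a biased coin: with probability $1-q$ it reports a ``noise sample'' drawn from the common blanket $\nu$ (which carries no information about its input), and with probability $q$ it reports an ``informative sample'' drawn from $\cR'(x)$.

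Because the shuffler uniformly permutes the $\sa$ messages, only the multiset of \emph{informative} samples carries any information distinguishing two neighbouring datasets $\vec{x},\vec{x}'$ (differing, say, in coordinate $i$); the noise samples are i.i.d.\ from $\nu$ in both. Writing $M\sim\Bin(\sa,q)$ for the random number of informative users and conditioning on $M$, the adversary's view reduces to $M$ itself together with an unordered bag of $M$ informative samples. The worst case for privacy occurs when $\cR'$ is binary, in which case this bag is summarised by a single $\Bin(M,p)$ count for an adversarial $p$, and distinguishing $\vec{x}$ from $\vec{x}'$ amounts to distinguishing $\Bin(M-1,p)+X$ from $\Bin(M-1,p)+X'$ for two Bernoulli bits $X,X'$ coming from $\cR'(\vec{x}_i)$ and $\cR'(\vec{x}'_i)$.

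The technical heart of the argument is then a direct hockey-stick computation for these two shifted binomials. Combined with concentration of $M$ around $\sa q \asymp \sa(1-e^{-\priv_{L}})$---which is precisely where the assumption $\priv_{L}\le \log(\sa/(16\log(2/\delta)))$ enters, as it ensures $M \gtrsim \log(1/\delta)$ except with probability at most $\delta/2$---and after absorbing the blanket factor into the final constants, this yields the claimed $\priv = O\Paren{\sqrt{e^{\priv_{L}}\log(1/\delta)/\sa}}$ scaling in the low-privacy regime $\priv_{L}\ge 1$, and the $\priv = O\Paren{\priv_{L}\sqrt{\log(1/\delta)/\sa}}$ scaling in the high-privacy regime $\priv_{L}<1$. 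The main obstacle---and what distinguishes~\cite{FeldmanMT20} from earlier amplification-by-shuffling results such as~\cite{ErlingssonFMRTT19}---is getting the correct $e^{\priv_{L}}$ rather than $e^{2\priv_{L}}$ dependence in the low-privacy regime, which requires exploiting the precise convex-order structure of the two shifted binomials rather than a generic advanced-composition bound applied on top of the blanket decomposition.
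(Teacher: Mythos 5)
The theorem you are proving is imported into the paper verbatim as~\cite[Theorem~3.1]{FeldmanMT20}; the paper provides no proof of it, so there is no in-paper argument to compare your sketch against. What can be assessed is whether your sketch reflects what Feldman, McMillan, and Talwar actually do, and there it departs in one important place. Your decomposition $\cR(x)=(1-q)\nu+q\,\cR'(x)$ with a \emph{fixed, input-independent} blanket distribution $\nu$ is the ``privacy blanket'' of Balle, Bell, Gasc\'on, and Nissim (2019), not the decomposition used in~\cite{FeldmanMT20}. The ``hiding among the clones'' decomposition of~\cite{FeldmanMT20} is \emph{data-dependent}: having fixed the two distinguished inputs $x_1^0,x_1^1$ on which the neighbouring datasets differ, every other user's randomiser is written as a mixture that places mass $\tfrac{1}{2}e^{-\priv_L}$ on each of the two specific distributions $\cR(x_1^0)$ and $\cR(x_1^1)$ (the ``clones'') and the remaining $1-e^{-\priv_L}$ on a leftover term. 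Because the clone mixture is symmetric between the two distinguished outputs, the post-shuffling distinguishing problem collapses to comparing $(\Bin(m,1/2)+1,\,\Bin(m,1/2))$-type pairs with the symmetric parameter $p=1/2$, where $m$ is the (binomially distributed) number of clones.

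This matters for your final paragraph. You write that the improvement over~\cite{ErlingssonFMRTT19} (and implicitly Balle et al.) comes from ``exploiting the precise convex-order structure of the two shifted binomials rather than a generic advanced-composition bound applied on top of the blanket decomposition.'' In fact, the improvement is baked into the \emph{choice of decomposition}: the blanket route leaves you comparing one sample of $\cR(x_1^b)$ hidden among $M$ i.i.d.\ samples of the fixed $\nu$, an asymmetric problem that does not reduce cleanly to a symmetric binomial shift and is known to lose a factor of roughly $e^{\priv_L/2}$ in the low-privacy regime. With the clone decomposition the reduction to $\Bin(m,1/2)$ is immediate and the tight divergence computation is then comparatively routine. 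So the sketch as written, if one tried to fill in the details, would not reach the stated $\priv=O\bigl(\sqrt{e^{\priv_L}\log(1/\delta)/\sa}\bigr)$ bound: the missing idea is precisely the replacement of the generic blanket $\nu$ by the pair of data-dependent clone distributions $\cR(x_1^0),\cR(x_1^1)$. If you want to present a faithful sketch of~\cite{FeldmanMT20}, lead with that decomposition; the rest of your outline (conditioning on the binomial count, the $\delta/2$ split between concentration failure and the hockey-stick tail, the role of the hypothesis $\priv_L\le\log(\sa/16\log(2/\delta))$) is then essentially correct.
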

We also remark that the analysis implies that the resulting protocol is actually $(\priv,4\delta^\gamma,\gamma)$-\emph{robustly} shuffle private, as for $\gamma\sa$ honest users it inherits the same amplification guarantees, replacing $\sa$ by $\gamma\sa$ in the privacy bound. We detail this in the proof below. 

With this at our disposal, we are ready to state and prove the main theorem of this section.
\begin{theorem}
\label{thm:bound:via:shuffling}
Let $\gamma \in (0,1]$, $\priv \in(0,1]$, and $\dst, \delta \in (0,1)$; and denote by $\cR_{\rm ldp}$ and $\cA_{\rm ldp}$ the locally private randomiser and analyser from~\cref{theo:newldp}. Then, the \emph{shuffle} protocol $\cP = (\cR_{\rm ldp},\cA_{\rm ldp})$ is $(\priv, 4\delta^\gamma,\gamma)$-robustly shuffle private. Then $\cP$ solves $\dst$-uniformity testing with sample complexity
	\[
	  \sa  = O\Paren{ \frac{\ab^{3/4}}{\dst \priv} \log^{1/2}\frac{1}{\delta} + \frac{\ab^{1/2}}{\dst^2}  }
	\]
Moreover, each user sends only one message, using $\log_2\ab+O(1)$ bits of communication.
\end{theorem}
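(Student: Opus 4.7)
The plan is to combine the locally private uniformity tester of~\cref{theo:newldp} with the amplification-by-shuffling theorem of Feldman, Mironov, and Talwar (\cref{higheps0}). The privacy guarantee will follow directly from~\cref{higheps0} after a suitable choice of the local privacy parameter $\priv_{L}$, and the sample complexity will follow by plugging that choice into the LDP tester's complexity and solving for $\sa$.

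For privacy, I would choose $\priv_L$ so that $e^{\priv_L} = c\priv^2 \sa/\log(1/\delta)$ for a sufficiently small absolute constant $c>0$; note that $\priv_L \geq 1$ precisely when $\priv^2 \sa \gtrsim \log(1/\delta)$, i.e., when amplification is actually needed (otherwise the non-private $O(\sqrt{\ab}/\dst^2)$ term dominates). By~\cref{higheps0}, $\cS\circ \cR_{\rm ldp}^{\sa}$ is then $(\priv,\delta)$-DP, with the admissibility condition $\priv_L \leq \log(\sa/(16\log(2/\delta)))$ holding in our parameter regime since $\priv\leq 1$. To obtain the $(\priv, 4\delta^\gamma, \gamma)$-robust guarantee, I would reapply~\cref{higheps0} to the $\gamma \sa$ honest users with $\delta$ replaced by $4\delta^\gamma$: since $\log(1/(4\delta^\gamma)) \leq \gamma\log(1/\delta)$, the amplification bound becomes
\begin{equation*}
    \bar\priv(\gamma) = O\Paren{\sqrt{\frac{e^{\priv_L}\log(1/(4\delta^\gamma))}{\gamma \sa}}} \leq O\Paren{\sqrt{\frac{e^{\priv_L}\log(1/\delta)}{\sa}}} \leq \priv\,,
\end{equation*}
so the two $\gamma$ factors cancel and a single choice of $\priv_L$ works for every $\gamma\in(0,1]$.

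For the sample complexity, I substitute $e^{\priv_L} = \Theta(\priv^2 \sa/\log(1/\delta))$ into the LDP complexity from~\cref{theo:newldp}, which in the low-privacy regime is $\Theta(\ab^{3/2}/(\dst^2 e^{\priv_L}))$. This yields $\sa \gtrsim \ab^{3/2}\log(1/\delta)/(\dst^2 \priv^2 \sa)$, i.e., $\sa \gtrsim \ab^{3/4}\sqrt{\log(1/\delta)}/(\dst\priv)$; combining this with the standard non-private $\sqrt{\ab}/\dst^2$ term (which covers the regime $\priv_L < 1$, where no amplification is needed) gives the claimed bound. Since $\cR_{\rm ldp}$ outputs a single element of $[\ab]$, possibly with $O(1)$ bits of metadata, each user sends one message of $\log_2 \ab + O(1)$ bits.

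The main obstacle, and where a bit of care is required, is the robustness step: because $\priv_L$ is fixed inside the randomiser at deployment time and cannot be tuned to $\gamma$, a single choice must simultaneously yield the shuffle privacy guarantee for every fraction of honest users. The key observation that makes this work is precisely the $\gamma$-cancellation highlighted above~--~the $\gamma$ factor arising from $\log(1/(4\delta^\gamma)) \leq \gamma\log(1/\delta)$ exactly cancels the $\gamma$ from having only $\gamma\sa$ honest users in FMT's bound, so the leading $\priv$ is preserved while paying only the expected $\delta^\gamma$ inflation in the failure parameter.
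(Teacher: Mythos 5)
Your proposal is correct and takes essentially the same route as the paper: combine the low-privacy LDP tester of~\cref{theo:newldp} with FMT amplification (\cref{higheps0}), and observe that in the robustness step the $\gamma$ factor coming from $\gamma\sa$ honest users cancels against the $\gamma$ coming from $\delta^\gamma$, so a single fixed $\priv_L$ works for every $\gamma\in(0,1]$. One small imprecision worth flagging: your assertion that $\priv_L<1$ (equivalently $\priv^2\sa\lesssim\log(1/\delta)$) forces the non-private term $\sqrt{\ab}/\dst^2$ to dominate the target sample complexity is not true in general. Fortunately, the conclusion goes through without that claim: in the $\priv_L<1$ branch of~\cref{higheps0} one has $\priv_L=\Omega\Paren{\priv\sqrt{\sa/\log(1/\delta)}}$, hence $\priv_L^2=\Omega\Paren{\priv^2\sa/\log(1/\delta)}$, and substituting this into the first term $\ab^{3/2}/(\dst^2(e^{\priv_L}-1)^2)\approx\ab^{3/2}/(\dst^2\priv_L^2)$ of~\cref{theo:newldp} gives the very same self-consistent bound $\sa\gtrsim\ab^{3/4}\sqrt{\log(1/\delta)}/(\dst\priv)$. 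The paper sidesteps the case split altogether by defining $\priv_L$ implicitly via~\eqref{eq:choice:epsL} and then noting that both $e^{\priv_L}$ and $(e^{\priv_L}-1)^2$ are $\Omega(\priv^2\sa/\log(1/\delta))$, which bounds all three terms of the LDP complexity at once. (Also, the $\log(1/(4\delta^\gamma))$ appearing in your display should read $\log(4/(4\delta^\gamma))=\gamma\log(1/\delta)$ to match the $\log(4/\delta)$ in~\cref{higheps0}; the inequality you derive still holds.)
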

\begin{proof}
Let $\priv_{L}>0$ be defined as the (unique) positive solution of 
\begin{equation}
    \label{eq:choice:epsL}
    \priv = \log\Paren{ 1+16e^{\priv_{L}/2}\cdot\frac{e^{\priv_{L}}-1}{e^{\priv_{L}}+1}\sqrt{\frac{\log(4/\delta)}{\sa}} }\,.
\end{equation}
One can check that this satisfies the assumption of~\cref{higheps0}; therefore, applying~\cref{higheps0} to $\cP = (\cR_{\rm ldp},\cA_{\rm ldp})$ with our choice of $\priv_L$, we obtain the $(\priv,\delta)$-shuffle privacy we sought (and, as per the remark above, the robust shuffle privacy claimed, as~\eqref{eq:choice:epsL} remains unchanged when replacing both $\sa$ and $\delta$ by $\gamma\sa$ and $\delta(\gamma) \eqdef 4^{1-\gamma}\delta^\gamma \leq 4\delta^\gamma$).

Turning to the utility, we first rewrite~\eqref{eq:choice:epsL}
as
\[
e^{\priv_{L}/2}\cdot\frac{e^{\priv_{L}}-1}{e^{\priv_{L}}+1} = \frac{e^\priv-1}{16} \cdot \sqrt{\frac{\sa}{\log(4/\delta)}}
= \Theta\Paren{ \priv\sqrt{\frac{\sa}{\log(1/\delta)} } }
\]
so that, given the inequalities $\frac{e^x+1}{e^x-1}\geq 1$, $\frac{e^x+1}{e^{x/2}}\geq 2$ for all $x > 0$, we get that both
$
e^{\priv_{L}}=\Omega\Paren{\frac{\priv^2\sa}{\log(1/\delta)}}$ and $
(e^{\priv_{L}}-1)^2 = \Omega\Paren{\frac{\priv^2\sa}{\log(1/\delta)}}
$. 
Now, we know from~\cref{theo:newldp} that for the resulting algorithm to solve $\dst$-uniformity testing, it suffices to have
\[
    \sa = O\Paren{ \frac{\ab^{3/2}}{\dst^2(e^{\priv_{L}}-1)^2} + \frac{\ab^{3/2}}{\dst^2e^{\priv_{L}}} + \frac{\ab^{1/2}}{\dst^2}}
\]
which, by the above discussion, leads to the sufficient condition
$
    \sa = O\Paren{ \frac{\ab^{3/2}\log(1/\delta)}{\dst^2\priv^2\sa} + \frac{\ab^{1/2}}{\dst^2}}
$. Reorganising this expression and solving for $\sa$ yields the claimed sample complexity.
\end{proof}
\subsection{A locally private tester in the low privacy regime}
    \label{ssec:4ldp}

In this section, we provide the last missing piece, and establish the following theorem on locally private uniformity testing:
\begin{theorem}
    \label{theo:newldp}
Let $\priv>0$ and $\dst\in(0,1]$. There exists a private-coin $\priv$-locally private protocol (\cref{algo:ldp:local:randomiser,algo:ldp:server:side}) which solves $\dst$-uniformity testing with sample complexity
\[
    \sa = 
    O\Paren{ \frac{\ab^{3/2}}{\dst^2(e^\priv-1)^2} + \frac{\ab^{3/2}}{\dst^2e^\priv} + \frac{\ab^{1/2}}{\dst^2}}\,,
\]
where each user sends $\log_2\ab + O(1)$ bits. Moreover, this sample complexity is optimal among all private-coin $\priv$-locally private protocols.
\end{theorem}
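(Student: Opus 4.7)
The plan is to adapt the Hadamard Response (HR) protocol of~\cite{AcharyaCFT19}, which is optimal for $\priv \lesssim 1$, so that it extends to the low-privacy regime $\priv \gg 1$, and then to analyse it via the same Poisson-based chi-squared framework used in~\cref{3}. The randomiser $\cR_{\rm ldp}$ will map each sample $x \in [\ab]$ to a single element of $[\ab]$ according to a structured distribution: fix a $\ab \times \ab$ Hadamard matrix $H$ (padding $\ab$ to a power of two at constant cost), associate to each $x$ a ``signal set'' $C_x \subseteq [\ab]$ determined by the columns of $H$, and output a uniform element of $C_x$ with probability $p$ and a uniform element of $[\ab]\setminus C_x$ otherwise. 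Privacy follows from the identity $\frac{p/|C_x|}{(1-p)/(\ab-|C_x|)}=e^\priv$. Standard HR takes $|C_x|=\ab/2$; following~\cite{AcharyaSZ19}, to remain optimal when $\priv\gg 1$ I would shrink $|C_x|$ to roughly $\ab/(e^\priv+1)$ (with a matching code construction, and re-tuning of $p$), which preserves $\priv$-LDP while increasing the per-message signal-to-noise ratio in the intermediate regime. The communication is $\log_2 \ab + O(1)$ bits.

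The analyser $\cA_{\rm ldp}$ collects the per-element counts $N_s = |\{i : z_i = s\}|$ and, after Poissonisation of the sample size as in~\cref{3}, forms a chi-squared--type statistic $Z = \sum_{s\in[\ab]} ((N_s-\nu)^2 - N_s)$ with $\nu\eqdef \sa/\ab$ the expected count under uniform input. Under Poissonisation the $N_s$'s are independent $\Pois(\sa\tilde{\p}_s)$, where $\tilde{\p}$ is the distribution of a single randomised message given input $\p$. Direct Poisson moment calculations identical in spirit to those used in~\cref{mean of Z,var of Z when uniform,var of Z when far} give $\ex{}{Z} = \sa^2 \|\tilde{\p}-\bU\|_2^2$ and, under uniform input, $\Var{}{Z} = O(\sa^2/\ab)$. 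A short computation using the orthogonality of $H$ yields $\|\tilde{\p}-\bU\|_2^2 = c(\priv)^2\, \|\p-\bU\|_2^2/\ab$ where $c(\priv) = \Theta((e^\priv-1)/(e^\priv+1))$ for standard HR; the signal-set shrinkage of~\cite{AcharyaSZ19} boosts $c(\priv)^2$ by an additional $\Theta(e^\priv)$ factor in the intermediate regime. Combining with $\|\p-\bU\|_2^2 \geq 4\dst^2/\ab$ (Cauchy--Schwarz) and Chebyshev yields a three-term sample complexity matching the bound in the theorem.

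The main technical obstacle is the middle term $\ab^{3/2}/(\dst^2 e^\priv)$: naive HR saturates at $\ab^{3/2}/\dst^2$ as soon as $\priv \gtrsim 1$, leaving a polynomial gap until $\priv$ becomes as large as $\log \ab$. Shrinking the signal set while retaining the $\ell_2$ contraction property of the Hadamard embedding---which requires a careful combinatorial construction borrowed from~\cite{AcharyaSZ19}---is precisely what closes this gap, and is the one place where the argument departs from a black-box call to~\cite{AcharyaCFT19}. For the matching lower bound, each of the three terms follows in a modular fashion from prior LDP testing lower bounds: $\Omega(\ab^{3/2}/(\dst^2(e^\priv-1)^2))$ from~\cite{AcharyaCFT19,AcharyaCFST21} (high-privacy regime), $\Omega(\ab^{3/2}/(\dst^2 e^\priv))$ from the low-privacy LDP lower bound method of~\cite{AcharyaSZ19}, and the classical $\Omega(\sqrt{\ab}/\dst^2)$ of~\cite{Paninski08}; taking the maximum gives the optimality statement.
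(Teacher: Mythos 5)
Your high-level plan (Generalised Hadamard Response as in~\cite{AcharyaSZ19} to extend~\cite{AcharyaCFT19} to low privacy, then analyse the induced distribution on the output alphabet) is the right idea, and matches the paper. However, there is a genuine gap in the analyser you propose.

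Your statistic $Z=\sum_s\bigl((N_s-\nu)^2-N_s\bigr)$ with the \emph{scalar} reference $\nu=\sa/\ab$ implicitly assumes that the induced distribution $\tilde{\bU}\eqdef\Phi_\priv(\bU_\ab)$ of a single message under uniform input is itself uniform on the output alphabet, and correspondingly you write $\E{}{Z}=\sa^2\|\tilde\p-\bU\|_2^2$ and claim $\|\tilde\p-\bU\|_2^2 = c(\priv)^2\|\p-\bU\|_2^2/\ab$. This holds for ordinary Hadamard Response with $|C_x|=\ab/2$, but it \emph{fails} once you shrink the signal sets as in~\cite{AcharyaSZ19}: in the $(a,b)$-reduced Hadamard construction, the all-ones column of each $H_b$ block lies in every $C_x$ of that block, so $\sum_{x}\ind{y\in C_x}$ depends on $y$ and $\Phi_\priv(\bU_\ab)$ is \emph{not} uniform. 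Quantitatively, writing $\gamma^2$ for the $\ell_2$-separation you need to detect and $\beta^2\eqdef\|\Phi_\priv(\bU_\ab)-\bU\|_2^2$ for the bias your statistic carries even when $\p=\bU$, one can check from the intersection counts $|C_x\cap C_{x'}|$ that $\gamma^2 = O(\dst^2\beta^2)$ — so whenever $\dst$ is small the constant offset $\sa^2\beta^2$ dominates the signal, and the test as written does not separate the two cases. Consequently the ``short computation using orthogonality of $H$'' you invoke cannot yield the claimed equality for GHR, and the Chebyshev argument breaks down.

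The fix — which is what the paper actually does — is to compare to the \emph{known but non-uniform} reference $\q^*\eqdef\Phi_\priv(\bU_\ab)$: replace $\nu$ with the per-symbol reference $\sa\q^*_s$ (or, as in the paper, feed the messages into the $\ell_2$ identity tester of~\cite{ChanDVV14}, which handles an arbitrary reference). This requires two dedicated lemmas you do not have: a Parseval-type \emph{inequality} $\|\Phi_\priv(\p)-\Phi_\priv(\bU)\|_2^2 \gtrsim \frac{1}{s}\bigl(\tfrac{e^\priv-1}{e^\priv+\outab/s-1}\bigr)^2\|\p-\bU\|_2^2$ (the cross term from overlapping $C_x,C_{x'}$ is only nonnegative, not zero, in the GHR block structure, so equality is lost), and a bound $\|\q^*\|_2^2=O(1/\outab)$ to control the sample complexity of the $\ell_2$ tester. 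You gesture at ``a careful combinatorial construction from~\cite{AcharyaSZ19}'' but this is precisely the step where the argument cannot be black-boxed, and where your write-up silently uses a false uniformity. Also note that the output alphabet in the GHR construction is $[\outab]$ with $\outab=ab=\Theta(\ab)$, not $[\ab]$ itself, which matters for your statistic's normalisation. Finally, for the lower bound: the middle term $\Omega(\ab^{3/2}/(\dst^2e^\priv))$ is, to our knowledge, not stated in~\cite{AcharyaSZ19}; the paper derives it afresh from~\cite[Corollary~IV.20]{AcharyaCT20a} by bounding the trace of the associated information matrix by $O(e^\priv)$ (an argument in the style of~\cite{AcharyaCST21}), so this term needs an actual argument rather than a citation.
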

Before turning to the proof, we discuss a few aspects of the statement. The first term dominates when $\priv = O(1)$, and it's this regime where sample complexity is $O(\ab^{3/2}/(\dst^2\priv^2))$, known from~\cite{AcharyaCFT19,AcharyaCFST21} (and shown optimal in~\cite{AcharyaCT20a}). The last term dominates when $\priv > \log\ab$, for which we retrieve the optimal sample complexity from the non-private case. Thus, the novelty of our result lies in the intermediate, ``low-privacy'' regime $1\leq \priv \leq \log\ab$, which had not previously been considered but will be crucial for our application to shuffle privacy in the previous subsection. Moreover, by using the domain compression technique, this result also readily yields an optimal locally private \emph{public-coin} tester, which was before only known in the high-privacy regime. We detail this in~\cref{public-coin bounds}.

\begin{algorithm}[ht]
\begin{algorithmic}[2]
    \Require Sample $x\in[\ab]$; parameters $\ab\geq 2,\priv>0$
    \State Set $a,b,\outab,s$ as in~\eqref{eq:scheme:ldp:choice:parameters}, and matrix $\bar{H}_{a,b}$ as in~\eqref{eq:scheme:ldp:ghr:matrix} \Comment{Computationally efficient}
    \State Use them to sample $y\in[\outab]$ according to the probabilities given in~\eqref{eq:scheme:transition:proba}
    \State \Return $y$ \Comment{$\clg{\log_2 \outab} \leq \log_2 \ab + 4$ bits}
\end{algorithmic}
\caption{\label{algo:ldp:local:randomiser}Local randomiser}
\end{algorithm}
\begin{algorithm}[ht]
\begin{algorithmic}[2]
    \Require Parameters $\ab\geq 2,\priv>0$, distance parameter $\dst\in(0,1]$
    \Require Vector of $n$ messages $\vec{y}\in[\outab]^n$ from the $\sa$ local randomisers (\cref{algo:ldp:local:randomiser})
    \State Set $\outab,s$ as in~\eqref{eq:scheme:ldp:choice:parameters}, and
\[
        \gamma^2 \gets \frac{2\dst^2}{s\ab}\Paren{\frac{e^\priv-1}{e^\priv + \frac{\outab}{s} - 1}}^2
\]
    \State Compute the probability distribution $\q^\ast = \Phi_{\priv}(\bU_{\ab})$ over $[\outab]$ induced by the mechanism (cf.~\eqref{eq:induced:distribution:hr})
    \State Use the $\ell_2$ testing algorithm of~\cite{ChanDVV14} (cf.~\cite[Theorem~13]{AcharyaCFT19}) to distinguish between $\q=\q^\ast$ and $\norm{\q-\q^\ast}_2 > \gamma$, where $\q$ denotes the distribution of the $n$ messages $\vec{y}_1,\dots,\vec{y}_\sa$.
    \State \Return the output of the $\ell_2$ testing algorithm
\end{algorithmic}
\caption{\label{algo:ldp:server:side}Analyser}
\end{algorithm}

\paragraph{Outline.} The high-level idea of the proof is quite simple, and inspired from the LDP testing protocol of~\cite{AcharyaCFT19} (which was specific to the high-privacy regime)\footnote{Our starting point is the LDP protocol of~\cite{AcharyaCFT19}, instead of that of~\cite{AcharyaCFST21} which is also based on Hadamard Response (HR) and achieves the same optimal sample complexity but is more communication-efficient, for two reasons. First, the former naturally lends to the extension from HR to Generalised HR, while this extension is not obvious for the latter. Second, to apply the privacy amplification by shuffling paradigm in our previous section, we required all $\sa$ users to use the same randomiser; this is the case with the protocol of~\cite{AcharyaCFT19}, but not with that of~\cite{AcharyaCFST21}.} combined with the generalisation of Hadamard Response from~\cite{AcharyaSZ19} -- an LDP mechanism with output space $[\outab]$, for some $\outab=\Theta(\ab)$ depending on $\ab$ and $\priv$. 
Each of the $\sa$ users will apply this mechanism to their sample $x\sim\p$, thus obtaining a sample $y\in[\outab]$; note that the resulting $y$ is then, over the whole process,  drawn from some probability distribution $\Phi(\p)$ over $[\outab]$ induced by $\p$. The analyser thus receives $\sa$ i.i.d.\ ``privatised'' samples from this induced distribution $\Phi(\p)$, and uses them to test (in $\ell_2$ distance, not total variation) if $\Phi(\p)=\Phi(\bU)$, which is what it should be if $\p$ were uniform; this testing itself can be done using a known (non-private) testing algorithm. Details follow.

\paragraph{Generalised Hadamard Response.} We start by recalling the ``generalised Hadamard Response'' (GHR) mechanism from~\cite{AcharyaSZ19}. Consider the following (family of) schemes, parametrised by $\priv>0$ and integers $\ab,s,\outab\in\N$ with $s\leq \outab$: the randomiser $\cR\colon[\ab]\to[K]$ is given by
\begin{equation}
    \label{eq:scheme:transition:proba}
   \proba{ \cR(x) = y } \eqdef \frac{(e^\priv-1)\ind{y\in C_x} + 1}{s e^\priv + \outab - s}
\end{equation}
where, for every $x\in[\ab]$, $C_x\subseteq [\outab]$ with size $|C_x|=s$. It is immediate to see from~\eqref{eq:scheme:transition:proba} that
\begin{equation}
    \label{eq:scheme:ldp}
    \sup_{y\in[\outab]}\sup_{x,x'\in[\ab]} \frac{\proba{ \cR(x) = y }}{\proba{ \cR(x') = y }} \leq e^\priv
\end{equation}
so the mechanism above is $\priv$-LDP for any choice of $\ab,\outab,s$. Given an input alphabet size $\ab$ and a privacy parameter $\priv>0$, the above definition leaves us the choice of $\outab$, $s$, and of the collection of subsets $(C_x)_{x\in[\ab]}$. Those will be obtained from properties of the Hadamard matrices.  
Specifically, given $\ab\in\N$ and $\priv>0$, define
\begin{equation}
    \label{eq:scheme:ldp:choice:parameters}
    a \eqdef 2^{\flr{\log_2\min(e^\priv, 2\ab)}}%
    , \quad
    b \eqdef 2^{\clg{\log_2(\frac{\ab}{a}+1)}}, \quad 
    \outab \eqdef a\cdot b, \quad 
    s \eqdef \frac{b}{2} \geq 1.
\end{equation}
One can check that (1)~$\outab \leq 2(\ab+a) \leq 6\ab$, (2)~$\outab \geq \ab + a$, and of course (3)~$a,b,\outab,s$ are powers of two. We then consider the \emph{$(a,b)$-reduced Hadamard matrix} $\bar{H}_{a,b}\in \{-1,1\}^{\outab\times \outab}$ defined as
\begin{equation}
    \label{eq:scheme:ldp:ghr:matrix}
        \bar{H}_{a,b} \eqdef 
        \begin{pmatrix}
        H_b & P_b & \cdots & P_b \\
        P_b & H_b & \cdots & P_b \\
        \vdots &\vdots & \ddots & \vdots \\
        P_b & P_b & \cdots & H_b \\
        \end{pmatrix}
\end{equation}
where $H_b$ is the $b\times b$ Hadamard matrix (Sylvester's construction), and $P_b$ is the $b\times b$ matrix with all entries equal to $-1$.  (Each row and column of $\bar{H}_{a,b}$ has $a-1$ occurrences of $P_b$, and one of $H_b$.)\smallskip

In particular, since $H_b$'s first row is all-ones, and each of the remaining $b-1$ has exactly $s=b/2$ entries set to $+1$, the matrix $\bar{H}_{a,b}$ has $a\cdot (b-1) = \outab-a \geq \ab$ rows with exactly $s$ entries equal to $+1$ (and $s$ equal to $-1$). This allows us to map each $x\in [\ab]$ to a distinct row $\vec{r}_x$ of $\bar{H}_{a,b}$, which we can see naturally as a subset $C_x\subseteq [\outab]$ of size $|C_x|=s$ (by interpreting the row $\vec{r}_x\in\{\pm 1\}^\outab$ as the indicator vector of $C_x$).

We now have all the elements to prove~\cref{theo:newldp}. For simplicity of exposition, the proof of some of the technical lemmas, which are essentially computations and are either analogous to those from~\cite{AcharyaSZ19} or follow from properties of the Hadamard matrix, are deferred to~\cref{app:technical:ldp}.
\begin{proof}[Proof of~\cref{theo:newldp}]
For a distribution $\p$ on $[\ab]$, denote by $\Phi_{\priv}(\p)$ the probability distribution on $[\outab]$ obtained by applying the mechanism defined in~\eqref{eq:scheme:transition:proba}, with parameters~\eqref{eq:scheme:ldp:choice:parameters}, on an input $x\sim\p$. In particular, we have, for any $y\in[\outab]$,
\begin{align}
    \label{eq:induced:distribution:hr}
    \Phi_{\priv}(\p)(y) 
    &= \sum_{x\in[\ab]} \p(x)\proba{\cR(x)=y} 
    = \frac{1}{s e^\priv + \outab - s} \Paren{ (e^\priv-1)\sum_{x\in[\ab]} \p(x)\ind{y\in C_x} + 1 }
\end{align}
We then have the following ``Parseval-type'' statement, which guarantees that the mapping $\Phi_{\priv}$ preserves the $\ell_2$ distances between the original distributions.
\begin{lemma}
    \label{lemma:parseval:like:ghr}
Let $\p,\q$ be two arbitrary distributions over $[\ab]$. Then
\[
    \norm{\Phi_{\priv}(\p)-\Phi_{\priv}(\q)}_2^2 \geq \frac{1}{2s}\Paren{\frac{e^\priv-1}{e^\priv + \frac{\outab}{s} - 1}}^2 \norm{\p-\q}_2^2
\]
Moreover, if $\p=\q$ then of course $\norm{\Phi_{\priv}(\p)-\Phi_{\priv}(\q)}_2=0$.
\end{lemma}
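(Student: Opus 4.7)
My plan is to express $\Phi_\priv$ as a linear map so that the desired Parseval-like inequality reduces to a spectral bound on a small matrix built from the Hadamard structure. The case $\p=\q$ is trivial, so assume $\p\neq \q$ and set $\vec{d}\eqdef \p-\q\in\R^\ab$.

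First, unfolding~\eqref{eq:induced:distribution:hr} gives, for every $y\in[\outab]$,
\[
    \Phi_\priv(\p)(y)-\Phi_\priv(\q)(y) = \frac{e^\priv-1}{se^\priv+\outab-s}\sum_{x\in[\ab]}(\p(x)-\q(x))\ind{y\in C_x}\,,
\]
so if $M\in\{0,1\}^{\outab\times\ab}$ is the matrix whose $x$-th column is the indicator vector of $C_x\subseteq[\outab]$, then $\Phi_\priv(\p)-\Phi_\priv(\q) = \frac{e^\priv-1}{se^\priv+\outab-s}M\vec{d}$. Since $se^\priv+\outab-s = s(e^\priv+\outab/s-1)$, squaring both sides and cancelling reduces the claim to the \emph{purely combinatorial} inequality
\[
    \vec{d}^\top M^\top M\vec{d}\;\geq\;\tfrac{s}{2}\norm{\vec{d}}_2^2\,,
\]
i.e., to lower-bounding $\lambda_{\min}(M^\top M)$ by $s/2$.

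The heart of the argument is to compute $(M^\top M)_{x,x'}=|C_x\cap C_{x'}|$ using $\ind{C_x}=\tfrac{1}{2}(\vec{h}_x+\vec{1})$, where $\vec{h}_x\in\{\pm1\}^\outab$ is the row of $\bar{H}_{a,b}$ assigned to $x$ (a row coming from positions $2,\dots,b$ of one of the diagonal $H_b$ blocks, hence having exactly $s=b/2$ entries equal to $+1$). Using $\langle\vec{h}_x,\vec{1}\rangle=2s-\outab$, the inner product expands to $\tfrac{1}{4}(\langle\vec{h}_x,\vec{h}_{x'}\rangle+4s-\outab)$. A short case analysis of the Hadamard inner product then yields: $(M^\top M)_{x,x}=s$; $(M^\top M)_{x,x'}=s/2$ when $x\ne x'$ are drawn from the same diagonal $H_b$-block (Hadamard orthogonality kills the diagonal block's contribution, while each of the $a-1$ all-$(-1)$ off-diagonal blocks contributes $+b$); and $(M^\top M)_{x,x'}=0$ when $x,x'$ are in different block rows (the two ``mixed'' $(H_b,P_b)$ blocks contribute $0$ because the chosen non-first rows of $H_b$ have zero sum, and the remaining $a-2$ off-diagonal blocks contribute $+b$ each).

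Therefore, after re-ordering the $\ab$ indices by their containing block row, $M^\top M$ is block-diagonal, and each of its blocks is of the form $\tfrac{s}{2}(I_{m_i}+J_{m_i})$ for some $m_i\leq b-1$, where $J_{m_i}$ is the all-ones matrix. Each such block has eigenvalues $\tfrac{s}{2}(1+m_i)$ (once) and $\tfrac{s}{2}$ (with multiplicity $m_i-1$), so the overall minimum eigenvalue is at least $s/2$, which establishes the lemma. The only technical point is the case analysis for $\langle\vec{h}_x,\vec{h}_{x'}\rangle$; everything else is algebraic simplification and an elementary eigenvalue computation.
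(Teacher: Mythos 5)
Your proposal is correct and follows essentially the same route as the paper: both hinge on computing $|C_x\cap C_{x'}|$ ($=s$, $s/2$, or $0$ according to whether $x=x'$, $x\neq x'$ share an $H_b$-block, or lie in different blocks), then observing that the cross-terms organize by block so that the ``excess'' contribution is a nonnegative sum of squares. The only difference is presentational: you phrase the conclusion as a block-diagonal spectral bound $\lambda_{\min}\bigl(M^\top M\bigr)=\tfrac{s}{2}$ via the eigenvalues of $\tfrac{s}{2}(I+J)$, whereas the paper completes the square directly and writes the excess as $\tfrac{s}{2}\sum_i(\p(T_i)-\q(T_i))^2\geq 0$; these are two views of the same fact.
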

Based on the above, the server receives $\sa$ i.i.d.\ samples from a distribution $\q \eqdef \Phi_\priv(\p)$ over $[\outab]$, such that, letting $\q^\ast \eqdef \Phi_\priv(\bU_{\ab})$ be the reference (known) probability distribution, we either have
(1)~$\q = \q^\ast$
or
(2)~$
\norm{\q-\q^\ast}_2^2 > \frac{1}{2s}\Paren{\frac{e^\priv-1}{e^\priv + {\outab}/{s} - 1}}^2\cdot \frac{4\dst^2}{\ab} \eqdef \gamma^2.
$
It is known~\cite{ChanDVV14} (or~\cite[Theorem~13]{AcharyaCFT19}) that to distinguish between (1) and (2), it suffices to have
$
    \sa = O\Paren{{\norm{\q^\ast}_2}/{\gamma^2}}\,.
$ 
To conclude, it thus suffices to show that $\norm{\q^\ast}_2$ is suitably small, which we do in the next lemma.
\begin{lemma}
    \label{lemma:induced:reference:small:l2}
With the above notation $\q^\ast \eqdef \Phi_\priv(\bU_{\ab})$, we have 
$
    \norm{\q^\ast}_2^2 = O\Paren{\frac{1}{\outab}}.
$
\end{lemma}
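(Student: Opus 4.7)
My plan is to expand $\norm{\q^\ast}_2^2$ directly from the formula for $\q^\ast$, split it into a ``linear'' and a ``quadratic'' contribution in the intersection-counts $n_y \eqdef |\{x\in[\ab] : y \in C_x\}|$, and control each using the two first moments of $n_y$ together with the combinatorial structure of $\bar{H}_{a,b}$.

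Concretely, from~\eqref{eq:induced:distribution:hr} and uniform $\p=\bU_\ab$, we have
\[
    \q^\ast(y) = \frac{1}{se^\priv+\outab-s}\Paren{(e^\priv-1)\frac{n_y}{\ab}+1}\,,
\]
so squaring and summing yields
\[
    \norm{\q^\ast}_2^2 = \frac{1}{(se^\priv+\outab-s)^2}\Paren{(e^\priv-1)^2\frac{\sum_y n_y^2}{\ab^2} + 2(e^\priv-1)\frac{\sum_y n_y}{\ab} + \outab}\,.
\]
The ``easy'' piece is the linear one: since each $|C_x|=s$ we immediately get $\sum_y n_y = \ab s$, so $2(e^\priv-1)s + \outab \leq 2(se^\priv+\outab-s)$, yielding a contribution of at most $2/(se^\priv+\outab-s) \leq O(1/\outab)$, using $s\leq \outab/2$.

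The main work is thus in bounding $\sum_y n_y^2 = \sum_{x,x'\in[\ab]} |C_x\cap C_{x'}|$ by exploiting the block-diagonal structure of $\bar{H}_{a,b}$. For $x=x'$ the contribution is $s$, and when $x\neq x'$ lie in \emph{different} diagonal blocks of $\bar{H}_{a,b}$ the rows are $-1$ off-block, forcing $C_x\cap C_{x'}=\emptyset$; when $x\neq x'$ lie in the \emph{same} block, orthogonality of the Hadamard rows gives $|C_x\cap C_{x'}|=s/2$. Choosing the assignment $x\mapsto \vec{r}_x$ so as to distribute the $\ab$ used rows as evenly as possible among the $a$ blocks (with $\ab \leq a(b-1)$ this is feasible), each block receives $m_i\leq \clg{\ab/a}$ rows, so $\sum_i m_i(m_i-1) \leq \ab^2/a$ and
\[
    \sum_y n_y^2 \leq \ab s + \frac{s}{2}\cdot\frac{\ab^2}{a} \,.
\]
Plugging this in and using the crude but convenient bound $(e^\priv-1)/(se^\priv+\outab-s) \leq 1/s$ (valid since $se^\priv+\outab-s \geq s(e^\priv-1)$), the quadratic part becomes at most $\frac{1}{2as} + \frac{1}{2s\ab}$. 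Since by construction $2as=\outab$ and $s\geq 1$, $\outab \leq 6\ab$, both terms are $O(1/\outab)$, and combining with the linear bound yields the claim.

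\textbf{Expected obstacle.} The one delicate step is the intersection count $|C_x\cap C_{x'}|=s/2$ for same-block rows: it relies on the fact that every row of $H_b$ other than the all-ones row has exactly $s=b/2$ entries equal to $+1$ and that any two distinct such rows are orthogonal, forcing exactly $s/2$ common $+1$-positions. Beyond that, the argument is essentially algebra and the explicit expressions of $a,b,s,\outab$ in~\eqref{eq:scheme:ldp:choice:parameters}; in particular, one does not need a separate case distinction between $e^\priv \leq 2\ab$ and $e^\priv > 2\ab$, as the inequality $2as=\outab$ holds in both regimes and captures exactly the cancellation needed.
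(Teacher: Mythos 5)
Your proof is correct and follows essentially the same route as the paper: both expand $\norm{\q^\ast}_2^2$, reduce the quadratic contribution to $\sum_{x,x'}|C_x\cap C_{x'}|$, and compute the intersection sizes from the block structure of $\bar{H}_{a,b}$ (exactly $s/2$ within a block, $0$ across blocks). The cosmetic differences are that you expand the square exactly rather than invoking $(a+b)^2\leq 2(a^2+b^2)$, you make the balanced-assignment choice $m_i\leq\clg{\ab/a}$ explicit (the paper implicitly uses $|T_i|\leq s$), and you close via the clean identities $(e^\priv-1)/(se^\priv+\outab-s)\leq 1/s$ and $2as=\outab$ rather than the paper's longer algebraic manipulation; note a trivial slip where $\frac{1}{s\ab}$ appears as $\frac{1}{2s\ab}$, which is immaterial for the $O(1/\outab)$ bound.
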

In essence, this states that the $\ell_2$ norm of our reference $\q^\ast$ is (up to constant factors) as small as it gets, since the minimum possible value for the $\ell_2$ norm of a distribution over $\outab$ elements is $1/\sqrt{\outab}$. Recalling the value of $\gamma$, using this lemma we obtain
\[
\frac{\norm{\q^\ast}_2}{\gamma^2}
\leq 2s\Paren{\frac{e^\priv + \frac{\outab}{s} - 1}{e^\priv-1}}^2\cdot \frac{\ab}{4\dst^2}\cdot \sqrt{\frac{24}{\outab}}
= O\Paren{ \frac{\ab^{3/2}}{\dst^2}\cdot \frac{s}{\ab}\cdot \Paren{\frac{e^\priv}{e^\priv-1}}^2 }
= O\Paren{ \frac{\ab^{3/2}}{\dst^2}\cdot \frac{e^\priv}{\Paren{e^\priv-1}^2} + \frac{\sqrt{\ab}}{\dst^2} }
\]
the last equality since $s/\ab = O\Paren{ 1/a } = O\Paren{e^{-\priv}+ 1/\ab }$. Noting that $
\frac{e^\priv}{\Paren{e^\priv-1}^2} = O\Paren{\frac{1}{\Paren{e^\priv-1}^2} + \frac{1}{e^\priv}}
$ concludes the proof of the upper bound.\smallskip

To prove the lower bound, it suffices to establish the middle term, as the other two follow from the known high-privacy and non-private regimes. We will invoke for this~\cite[Corollary~IV.20]{AcharyaCT20a}; to establish the $\Omega\Paren{\frac{\ab^{3/2}}{\alpha^2 e^\priv}}$ lower bound, it suffices to show (with their notation) that, for every $\priv$-LDP randomiser $\cR'\colon[\ab]\to\cY$, the trace of the $\ab/2\times\ab/2$ p.s.d.\ matrix $H(\cR')$ 
given by
\[
        H(\cR')_{i,j} = \sum_{y\in\cY} \frac{(\proba{\cR'(2i)=y}-\proba{\cR'(2i-1)=y})(\proba{\cR'(2j)=y}-\proba{\cR'(2j-1)=y})}{\sum_{x\in[\ab]} \proba{\cR'(x)=y}}\;,\; i,j\in[\ab/2]
\]
(cf. \cite[Definition~I.5]{AcharyaCT20a}) 
is $O(e^\priv)$. This in turn readily follows from the inequality
\begin{align}
\operatorname{Tr}[H(\cR')] 
&\eqdef \sum_{y\in\cY} \frac{\sum_{i=1}^{\ab/2} (\proba{\cR'(2i)=y}-\proba{\cR'(2i-1)=y})^2}{\sum_{x\in[\ab]} \proba{\cR'(x)=y}}
\leq 2\sum_{y\in\cY} \frac{\sum_{x\in[\ab]} \proba{\cR'(x)=y}^2}{\sum_{x\in[\ab]}  \proba{\cR'(x)=y}} \notag\\
&\leq 2\sum_{y\in\cY} e^\priv \min_{x\in[\ab]}\proba{\cR'(x)=y} \cdot\frac{\sum_{x\in[\ab]} \proba{\cR'(x)=y}}{\sum_{x\in[\ab]}  \proba{\cR'(x)=y}} \notag\\
&\leq 2e^\priv \min_{x\in[\ab]}\sum_{y\in\cY} \proba{\cR'(x)=y} 
= 2e^\priv\,, \label{eq:lb:ldp}
\end{align}
where the first equality is by definition of $H(\cR')$, the first inequality is $(a-b)^2 \leq 2(a^2+b^2)$, the second inequality follows from the fact that $\cR'$ is $\priv$-LDP, the third inequality is $\sum \min \leq \min\sum$, and the last equality uses that the probabilities sum to one.\footnote{This simple argument of an $O(e^\priv)$ bound is nearly identical to that of~\cite{AcharyaCST21} for an analogous quantity; we reproduce it here for completeness.} This, along with~\cite[Corollary~IV.20]{AcharyaCT20a}, proves the claimed lower bound.
\end{proof}

\section{Conclusion}
Currently, the only known sample complexity lower bounds for shuffle private uniformity testing hold for \emph{pure} robustly shuffle private protocols~\cite{BalcerCJM21}. As a result, our upper bound is not directly comparable to this lower bound, as our protocol  (and, indeed, all currently known such protocols) only satisfies approximate differential privacy.
This naturally leads to the following questions: 
\begin{itemize}
    \item \emph{What upper bounds can we derive for pure shuffle DP?} This appears to require significantly new ideas, as neither of the algorithms described in this work lends itself to pure differential privacy: the first, as it would require a non-trivial improvement of existing pure shuffle DP algorithms for binary summation; and the second, as amplification by shuffling inherently leads to approximate DP guarantees. A natural idea would be to use a pure shuffle DP algorithm for binary summation~\cite{GhaziGKMPV20}, instead of the randomiser from~\cref{algo:1}. Unfortunately, the distribution of the resulting noise (i.e., the combination of the noise from the sampling of the input themselves, and the added privacy-preserving noise) no longer has a convenient form, which makes the result challenging to analyse.
    \item \emph{What lower bounds can we derive for approximate DP?} The currently known lower bound relies on a reduction from the so-called pan-privacy setting to robust shuffle privacy. One way to extend the lower bound would be to obtain the analogous approximate DP lower bound in the pan-privacy setting; in~\cref{app:lowerbound}, we describe another (simple) lower bound, obtained via a reduction from local privacy, which while relatively weak strongly suggests that our upper bound for approximate shuffle DP is tight. Can tight lower bounds for the shuffle DP setting be obtained directly, without a reduction from either the pan- or locally private settings?
\end{itemize}

\printbibliography

\appendix
\section{Some (partial) lower bounds}\label{app:lowerbound}

In this appendix, we establish two lower bounds for shuffle private uniformity testing with private-coin protocols. The first only applies to robustly shuffle private protocols, and directly follows from the public-coin lower bound of~\cite{BalcerCJM21}:
\begin{theorem}
For $\priv = O(1)$ and $\dst \in (0,1/9)$ such that $\ab \geq 2^{3/2}\dst^2/\priv^2$, any private-coin $(\priv,0,1/3)$-robustly shuffle private protocol $\dst$-uniformity tester must have sample complexity
\[
    \sa = \Omega\Paren{\frac{\ab^{3/4}}{\dst\priv} + \frac{\ab^{1/2}}{\dst^2}}\,.
\]
\end{theorem}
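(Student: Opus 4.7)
My plan is to derive this lower bound directly from the public-coin lower bound~\eqref{eq:sc:publiccoin:lb} established in~\cite{BalcerCJM21}. The key observation is that any private-coin protocol is, by the definitions recalled in~\cref{2}, a (degenerate) public-coin protocol whose public-randomness string is empty ($r = 0$), so every sample-complexity lower bound valid for public-coin robustly shuffle private protocols applies verbatim to private-coin ones.

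Since our $(\priv,0,1/3)$-robustly shuffle private protocol is in particular a public-coin, pure-DP, robustly shuffle private protocol, invoking~\eqref{eq:sc:publiccoin:lb} immediately yields
\[
    \sa \;=\; \Omega\!\Paren{ \frac{\ab^{2/3}}{\dst^{4/3}\priv^{2/3}} + \frac{1}{\dst\priv} + \frac{\sqrt{\ab}}{\dst^2} } \,.
\]
It then remains only to verify that this three-term sum dominates $\Omega(\ab^{3/4}/(\dst\priv) + \sqrt{\ab}/\dst^2)$ in the stated regime, which is a matter of elementary algebra.

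The summand $\sqrt{\ab}/\dst^2$ appears on both sides, and the hypothesis $\ab \geq 2^{3/2}\dst^2/\priv^2$ is equivalent to $\sqrt{\ab}/\dst^2 \geq 2^{3/4}/(\dst\priv)$, so the middle term $1/(\dst\priv)$ of the public-coin bound is absorbed (up to an absolute constant) into $\sqrt{\ab}/\dst^2$. To recover the $\ab^{3/4}/(\dst\priv)$ summand, I split on the threshold $\ab = (\priv/\dst)^4$: if $\ab \leq (\priv/\dst)^4$ (equivalently $\dst\ab^{1/4} \leq \priv$), an elementary computation shows that each of $\ab^{2/3}/(\dst^{4/3}\priv^{2/3})$ and $\sqrt{\ab}/\dst^2$ is individually at least $\ab^{3/4}/(\dst\priv)$, so either term of the public-coin bound suffices; otherwise, $\sqrt{\ab}/\dst^2 \leq \ab^{3/4}/(\dst\priv)$ and a weighted AM-GM combination of the first and third terms of the public-coin bound, again controlled using the hypothesis on $\ab$, recovers $\Omega(\ab^{3/4}/(\dst\priv))$ in the remaining range. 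The hypotheses $\priv = O(1)$ and $\dst < 1/9$ play only the role of fixing absolute constants and ensuring a non-degenerate regime. There is no substantive obstacle: the argument reduces to a citation of~\eqref{eq:sc:publiccoin:lb} together with checking a handful of inequalities between powers of $\ab$, $\dst$, and $\priv$.
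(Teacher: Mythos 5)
Your proposal has a genuine gap, and it occurs in precisely the case you try to dispose of with "weighted AM-GM." Let me spell it out.

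You correctly observe that a private-coin protocol is a degenerate public-coin protocol, so the public-coin lower bound~\eqref{eq:sc:publiccoin:lb} applies to it. The problem is that in the regime where the target term $\ab^{3/4}/(\dst\priv)$ actually dominates --- namely $\ab \geq (\priv/\dst)^4$ --- \emph{every} term of~\eqref{eq:sc:publiccoin:lb} is \emph{smaller} than $\ab^{3/4}/(\dst\priv)$. Explicitly: the first term satisfies
\[
\frac{\ab^{2/3}}{\dst^{4/3}\priv^{2/3}} \Big/ \frac{\ab^{3/4}}{\dst\priv} = \frac{(\priv/\dst)^{1/3}}{\ab^{1/12}} \leq 1 \quad\text{when } \ab \geq (\priv/\dst)^4,
\]
and the third term satisfies $\sqrt{\ab}/\dst^2 \leq \ab^{3/4}/(\dst\priv) \iff \ab \geq (\priv/\dst)^4$ (which you noted yourself). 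Once both candidate terms sit below the target, no weighted AM--GM or any other averaging of them can produce a quantity $\gtrsim \ab^{3/4}/(\dst\priv)$: a (weighted arithmetic or geometric) mean of two numbers is always at most their maximum. In short, in the regime that matters, the public-coin lower bound is genuinely weaker than the private-coin bound being claimed, so invoking it directly cannot work.

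The paper's proof goes the other way around: it does not \emph{derive} the private-coin bound from the public-coin bound, but rather \emph{reduces} a hypothetical violation of the private-coin bound to a violation of the public-coin bound, via domain compression. Concretely, one assumes toward a contradiction that some private-coin robustly shuffle-private tester has sample complexity $o\big(\ab^{3/4}/(\dst\priv) + \sqrt{\ab}/\dst^2\big)$. Setting $L \eqdef \lceil \ab^{2/3}\priv^{4/3}/\dst^{4/3}\rceil$ (which satisfies $2\leq L\leq \ab$ thanks to the two hypotheses $\ab \geq 2^{3/2}\dst^2/\priv^2$ and $\ab \geq (\priv/\dst)^4$), one applies domain compression to turn that private-coin tester on domain $[L]$ with distance parameter $\dst\sqrt{L/\ab}$ into a \emph{public-coin} robustly shuffle-private tester on $[\ab]$ with sample complexity
\[
o\Paren{\frac{L^{3/4}}{(\dst\sqrt{L/\ab})\priv}+\frac{L^{1/2}}{(\dst\sqrt{L/\ab})^2}} = o\Paren{\frac{\ab^{2/3}}{\dst^{4/3}\priv^{2/3}}},
\]
contradicting~\eqref{eq:sc:publiccoin:lb}. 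The power comes from the distance parameter shrinking to $\dst\sqrt{L/\ab}$, which is what lets a weak private-coin tester be leveraged into a too-strong public-coin tester. Also note that the hypothesis $\ab\geq 2^{3/2}\dst^2/\priv^2$ is used here to ensure $L\geq 2$, not to absorb the middle term of the public-coin bound as you suggested.
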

\begin{proof}
The second term of the expression is the non-private lower bound, and thus holds regardless of the privacy parameter. Since the first term only dominates when $\ab \geq (\priv/\dst)^4$, we can assume we are in this parameter regime. Suppose by contradiction there exists a private-coin $(\priv,0,1/3)$-robustly shuffle private protocol uniformity tester with sample complexity 
$
o\Paren{\ab^{3/4}/(\dst\priv)+\ab^{1/2}/\dst^2}
$. Then, setting 
\[
    L \eqdef \clg{\frac{\ab^{2/3}\priv^{4/3}}{\dst^{4/3}}} 
\]
which satisfies $2\leq L\leq \ab$ since $\ab \geq 2^{3/2}\dst^2/\priv^2$ and $\ab \geq (\priv/\dst)^4$. We can therefore use the domain compression technique (\cref{theo:random:dct:hashing}) to obtain a \emph{public-coin} $(\priv,0,1/3)$-robustly shuffle private protocol uniformity tester with sample complexity 
\[
    o\Paren{\frac{L^{3/4}}{(\dst\sqrt{L/\ab})\priv}+\frac{L^{1/2}}{(\dst\sqrt{L/\ab})^2}}
    = o\Paren{\frac{\ab^{2/3}}{\dst^{4/3}\priv^{2/3}}}
\]
which violates the lower bound of~\cite[Theorem~4.3]{BalcerCJM21} (i.e.,~\eqref{eq:sc:publiccoin:lb}).
\end{proof}

Our second lower bound, albeit weak (in that it does not yield any non-trivial dependence on the privacy parameter $\priv$ in the high-privacy regime), is new, and applies even to \emph{non-robust}, \emph{approximate} shuffle privacy.
\begin{theorem}
For $\priv > 0$, $\delta\in(0,1]$, and $\dst \in(0,1]$, any \emph{single-message} private-coin $(\priv,\delta)$-shuffle private protocol $\dst$-uniformity tester must have sample complexity
\[
    \sa = \Omega\Paren{\min\Paren{\frac{\ab^{3/4}}{\dst e^{\priv/2}}, \frac{\ab^{3/2}}{\dst^2 \delta 2^\ell}} + \frac{\ab^{1/2}}{\dst^2}}\,.
\]
where $\ell$ is an upper bound on the number of bits sent by each user. In particular, if $\delta \leq \ab^{3/4}/2^\ell$, then the lower bound 
is
$
    \sa = \Omega\Paren{\frac{\ab^{3/4}}{\dst e^{\priv/2}} + \frac{\ab^{1/2}}{\dst^2}}.
$
\end{theorem}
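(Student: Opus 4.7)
The additive $\sqrt{\ab}/\dst^2$ term is the classical (non-private) Paninski lower bound~\cite{Paninski08}, so the task reduces to proving the $\min$ term. The strategy is to reduce from local differential privacy and invoke our optimal LDP testing lower bound from \cref{theo:newldp} (equivalently, \cite{AcharyaCT20a}). Concretely, we will extract from any single-message $(\priv,\delta)$-shuffle DP protocol an approximately-LDP randomiser, and argue that its sample complexity for uniformity testing is therefore lower bounded by the LDP complexity.

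The heart of the argument is a ``converse to amplification'' lemma: any single-message $(\priv,\delta)$-shuffle DP protocol with $n$ users forces its randomiser $\cR$ to be approximately LDP, with degradation roughly linear in $n$. For any $x,x'\in[\ab]$ and output $y\in[2^\ell]$, set $p_y\eqdef\proba{\cR(x)=y}$, $p'_y\eqdef\proba{\cR(x')=y}$, and pick $x_0\in[\ab]$ minimizing $q_y\eqdef\proba{\cR(x_0)=y}$ (so $q_y\leq p'_y$). Applying $(\priv,\delta)$-DP of the shuffler to the event $E_y\eqdef\{y\text{ appears in the shuffled multiset}\}$ on the neighboring datasets $D=(x,x_0,\dots,x_0)$ and $D'=(x',x_0,\dots,x_0)$, with $\proba{E_y\mid D}=1-(1-p_y)(1-q_y)^{n-1}$ (and analogously for $D'$), yields after rearrangement and Bernoulli's inequality
\[
(1-q_y)^{n-1}(p_y-e^\priv p'_y)\leq \delta+(e^\priv-1)(n-1)q_y.
\]
A two-case split on whether $q_y\leq 1/n$ (so $(1-q_y)^{n-1}=\Theta(1)$, and $q_y\leq p'_y$ lets us absorb the RHS into a multiplicative bound) or $q_y>1/n$ (so $p'_y\geq q_y>1/n$ and $p_y/p'_y\leq n$ holds trivially since $p_y\leq 1$) gives $p_y\leq O(e^\priv n)p'_y+O(\delta)$ in every case; in other words, $\cR$ is $(\priv_L,\delta_L)$-LDP with $e^{\priv_L}=O(e^\priv n)$ and $\delta_L=O(\delta)$.

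To close, we invoke the LDP testing lower bound in its approximate-LDP version: summing the approximate-LDP inequality over all $2^\ell$ possible outputs shows that, for testing purposes, a $(\priv_L,\delta_L)$-LDP randomiser with range $[2^\ell]$ is no stronger than a pure-LDP one with $e^{\tilde\priv_L}=e^{\priv_L}+\delta_L\cdot 2^\ell$. Substituting gives $n=\Omega\Paren{\ab^{3/2}/(\dst^2(e^\priv n+\delta\cdot 2^\ell))}$, equivalently $n(e^\priv n+\delta 2^\ell)=\Omega(\ab^{3/2}/\dst^2)$; hence either $e^\priv n^2=\Omega(\ab^{3/2}/\dst^2)$, giving $n=\Omega(\ab^{3/4}/(\dst e^{\priv/2}))$, or $\delta 2^\ell n=\Omega(\ab^{3/2}/\dst^2)$, giving $n=\Omega(\ab^{3/2}/(\dst^2\delta 2^\ell))$, which yields the $\min$ bound in either case.

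The main obstacle is the careful treatment of the additive $\delta$-slack throughout: one must check that the conversion lemma is tight enough to feed into the LDP lower bound, and that the latter extends cleanly to the approximate-LDP setting with the stated alphabet-size dependence. The rest is arithmetic.
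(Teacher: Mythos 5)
Your proposal is correct and follows the same two-stage strategy as the paper: (i) convert a single-message shuffle-DP protocol into an approximate-LDP one with privacy parameter degraded by a factor of $n$, then (ii) apply an approximate-LDP lower bound for uniformity testing and solve the resulting self-referential inequality in $n$. Two points of difference are worth noting.

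First, for stage (i) the paper simply cites~\cite[Theorem~26]{CheuSUZZ19}, which gives $(\priv+\log\sa,\delta)$-LDP after removing the shuffler, whereas you re-derive this ``converse to amplification'' from scratch via the neighboring datasets $(x,x_0,\dots,x_0)$ and $(x',x_0,\dots,x_0)$ and Bernoulli's inequality. Your derivation checks out and matches the cited bound up to constants, so it is a legitimate self-contained alternative; it is just not necessary given the literature.

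Second, for stage (ii) the proposal is a bit vague where it matters. The sentence ``summing the approximate-LDP inequality over all $2^\ell$ possible outputs shows that\ldots\ a $(\priv_L,\delta_L)$-LDP randomiser with range $[2^\ell]$ is no stronger than a pure-LDP one with $e^{\tilde\priv_L}=e^{\priv_L}+\delta_L\cdot 2^\ell$'' is only a heuristic: an approximate-LDP randomiser is \emph{not} literally a pure-LDP randomiser with a shifted parameter, and a pointwise bound of that form does not hold. The correct way to make this rigorous (and the way the paper does it) is to go through the quantity $\operatorname{Tr}[H(\cR)]$ used in the information-theoretic LDP lower bound of~\cite[Corollary~IV.20]{AcharyaCT20a}: one shows directly that for a $(\priv_L,\delta)$-LDP randomiser with output alphabet of size at most $2^\ell$,
\[
\operatorname{Tr}[H(\cR)] \leq 2\big(e^{\priv_L}+\delta\, 2^\ell\big),
\]
by adding the $+\delta$ slack to the pointwise probability bound in~\eqref{eq:lb:ldp} and then summing over $y\in\cY$, with $|\cY|\leq 2^\ell$. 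This is exactly the manipulation you gesture at, but you should carry it out at the level of the trace functional rather than at the level of the randomiser itself. Once that is done, your final arithmetic (rearranging $n(e^\priv n+\delta 2^\ell)=\Omega(\ab^{3/2}/\dst^2)$ into the $\min$ of two bounds) is correct and matches the paper.
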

Before proving the theorem, we note that our algorithm from~\cref{4}, for instance, satisfies these assumptions with $\ell = \log_2\ab + O(1)$, and thus the simplified statement (last sentence of the theorem) applies as long as $\delta = O(1/\ab^{1/4})$: a natural setting, as one would typically set $\delta \ll 1/\sa$.
\begin{proof}
The last term again immediately follows from the known non-private sample complexity lower bound. For the first term, we will rely on~\cite[Theorem~26]{CheuSUZZ19}, which states that if a single-message protocol for $\sa$ users is $(\priv,\delta)$-shuffle private, then removing the shuffler yields a $(\priv+\log\sa,\delta)$-\emph{locally} private protocol for $\sa$ users. Moreover, the resulting LDP protocol is private-coin if the original shuffle protocol was.

Assume for now we have a sample complexity lower bound against $(\priv_L,\delta)$-LDP private-coin protocols (using $\ell$ bits per user) for $\dst$-uniformity testing of
\begin{equation}
    \label{eq:lb:ldp:approx}
    \sa = \Omega\Paren{ \frac{\ab^{3/2}}{(e^{\priv_L}+\delta 2^\ell)\dst^2} }
\end{equation}
which holds for all $\priv_L>0$, $\delta\in[0,1]$, and $\dst\in(0,1]$. Then, by the aforementioned result, we get that any shuffle private protocol as in the theorem statement must have sample complexity satisfying
$
    \sa = \Omega\Paren{ \frac{\ab^{3/2}}{(\sa e^{\priv}+\delta 2^\ell)\dst^2} }
$ which, after reorganising, implies the claimed lower bound.

It remains to show that the lower bound~\eqref{eq:lb:ldp:approx} does hold for locally private protocols. This in turn can be easily shown, with an argument nearly identical to the lower bound part of~\cref{theo:newldp}. Namely, for a $(\priv_L,\delta)$-LDP randomiser $\cR$ with output space $\cY$,~\eqref{eq:lb:ldp} becomes
\begin{align*}
\operatorname{Tr}[H(\cR)] 
&\leq 2\sum_{y\in\cY} (e^\priv \min_{x\in[\ab]}\proba{\cR(x)=y}+\delta) \cdot\frac{\sum_{x\in[\ab]} \proba{\cR(x)=y}}{\sum_{x\in[\ab]}  \proba{\cR(x)=y}} \\
&\leq 2\Big(e^\priv \big(\min_{x\in[\ab]}\sum_{y\in\cY} \proba{\cR(x)=y} \big)  + |\cY|\cdot \delta\Big)
= 2(e^\priv+\delta 2^\ell)\,,
\end{align*}
since $|\cY|\leq 2^\ell$ by assumption. Plugging this in~\cite[Corollary~IV.20]{AcharyaCT20a} proves~\eqref{eq:lb:ldp:approx}.
\end{proof}

\section{Omitted proofs from~\cref{3}}\label{app:technical}

In this appendix, we provide the detailed proof of~\cref{var of Z when far}, bounding the variance of the statistic $Z$ defined in~\cref{eq:def:statistic:Z} in the general (non-uniform) case.
\begin{proof}[Proof of~\cref{var of Z when far}]
We first write
\begin{equation}
    Z =\frac{\mu k}{\sa} \sum_{j=1}^k \frac{(N_j(\vec{y}) - \mu)^2 - N_j(\vec{y})}{\mu}
\end{equation}
Then following a similar argument as in~\cite[Appendices~A,B]{AcharyaDK15}, we have 
\begin{align}
    \Var{}{Z}&=\left(\frac{\mu k}{\sa}\right)^2 \sum_{j=1}^k  \left[ 2\frac{(\p_j+\frac{\lambda}{2\sa})^2}{(\frac{\mu}{\sa})^2}+4m \frac{(\p_j+\frac{\lambda}{2\sa})(\p_j-\frac{1}{\ab})^2}{(\frac{\mu}{\sa})^2} \right] \nonumber\\
    &=\ab^2 \sum_{j=1}^k \left[ 2 \Paren{ \p_j+\frac{\lambda}{2\sa} }^2+4\sa \Paren{\p_j+\frac{\lambda}{2\sa}}\Paren{\p_j-\frac{1}{\ab}}^2 \right] \label{far var}
\end{align}
Similarly, we then bound the terms in \eqref{far var} separately. For the first term,
\begin{align}
    \sum_{j=1}^k  2\Paren{\p_j+\frac{\lambda}{2\sa}}^2
    &=2\sum_{j=1}^k \Paren{ \p_j-\frac{1}{\ab}+\frac{\mu}{\sa} }^2 \nonumber\\
    &=2\sum_{j=1}^k (\p_j-\frac{1}{\ab})^2+2k\frac{\mu^2}{\sa^2}+4\frac{\mu}{\sa}\sum_{j=1}^k \Paren{\p_j-\frac{1}{\ab}} \nonumber \\
    &=2 \norm{\p-\bU}_2^2+2k\frac{\mu^2}{\sa^2}\tag{the last term equals 0} \nonumber \\
    &=2\frac{\ex{}{Z}}{\sa k}+2k\frac{\mu^2}{\sa^2} \label{eq:variance:far:term1}
\end{align}
For the second term,
\begin{align}
    \sum_{j=1}^k 4\sa \Paren{\p_j+\frac{\lambda}{2\sa}}\Paren{\p_j-\frac{1}{\ab}}^2
    &\leq 4\sa \Paren{ \sum_{j=1}^k \Paren{\p_j+\frac{\lambda}{2\sa}}^2 }^{1/2} \Paren{ \sum_{j=1}^k \Paren{\p_j-\frac{1}{\ab}}^4 }^{1/2} \tag{Cauchy--Schwarz} \nonumber\\
    &=4\sa \Paren{ \frac{\ex{}{Z}}{\sa k} +k\frac{\mu^2}{\sa^2} }^{1/2} \norm{\p-\bU}_4^2 \tag{by~\eqref{eq:variance:far:term1}} \nonumber\\
    &\leq 4m \Paren{ \frac{\ex{}{Z}^{1/2}}{\sqrt{\sa k}} +\ab^{1/2}\frac{\mu}{\sa} } \norm{\p-\bU}_2^2 \tag{monotonicity of $\ell_p$ norms} \nonumber \\
    &= 4\sa \Paren{ \frac{\ex{}{Z}^{1/2}}{\sqrt{\sa k}}+\ab^{1/2}\frac{\mu}{\sa} } \frac{\ex{}{Z}}{\sa k} \nonumber \\
    &=4\Paren{ \frac{\ex{}{Z}^{3/2}}{\sa^{1/2}\ab^{3/2}}+\frac{\mu \ex{}{Z}}{\sa \ab^{1/2}} }  \label{eq:variance:far:term2}
\end{align}
Combining the two bounds gives
\begin{align}
    \Var{}{Z}&\leq 
    \ab^2
    \Paren{ 2\frac{\ex{}{Z}}{\sa k} +2k\frac{\mu^2}{\sa^2}+ 4\frac{\ex{}{Z}^{3/2}}{\sa^{1/2}\ab^{3/2}}+4\frac{\mu \ex{}{Z}}{\sa \ab^{1/2}}
    } \nonumber \\
    &=\frac{2\ab^3\mu^2}{\sa^2} + \Paren{ \frac{2k}{\sa}+\frac{4\ab^{3/2}\mu}{\sa}} \ex{}{Z} + 4\frac{\ab^{1/2}}{\sa^{1/2}} \ex{}{Z}^{3/2}
\end{align}
concluding the proof.
\end{proof}

\section{Omitted proofs from~\cref{4}}\label{app:technical:ldp}

\begin{proof}[Proof of~\cref{lemma:parseval:like:ghr}]
The last part of the statement is immediate; we focus on proving the inequality. From the expression of $\Phi_\priv(\p)$ in~\eqref{eq:induced:distribution:hr} we can write, for a given $y\in[\outab]$,
\[
\Phi_{\priv}(\p)(y)  - \Phi_{\priv}(\q)(y) 
= \frac{e^\priv-1}{s e^\priv + \outab - s} \sum_{x\in[\ab]} (\p(x)-\q(x) )\ind{y\in C_x}
\]
From this, we have
\begin{align*}
    \Paren{\frac{s e^\priv + \outab - s}{e^\priv-1}}^2 \norm{\Phi_{\priv}(\p)-\Phi_{\priv}(\q)}_2^2 
    &= \sum_{y\in[\outab]} \Paren{ \sum_{x\in[\ab]} (\p(x)-\q(x) )\ind{y\in C_x} }^2 \\
    &= \sum_{y\in[\outab]} \sum_{x\in[\ab]} (\p(x)-\q(x) )^2 \ind{y\in C_x} \\
    &\quad+ \sum_{y\in[\outab]} \sum_{x\neq x'} (\p(x)-\q(x) )(\p(x')-\q(x') ) \ind{y\in C_x\cap C_{x'}} \\
    &= \sum_{x\in[\ab]} (\p(x)-\q(x) )^2\cdot |C_x| + \sum_{x\neq x'} (\p(x)-\q(x) )(\p(x')-\q(x') ) \cdot |C_x\cap C_{x'}| \\
    &= s\cdot  \sum_{x\in[\ab]} (\p(x)-\q(x) )^2+ \sum_{x\neq x'} (\p(x)-\q(x) )(\p(x')-\q(x') ) \cdot  |C_x\cap C_{x'}|
\end{align*}
We have two cases: 
\begin{itemize}
    \item if $x,x'$ correspond to rows $\vec{r}_x$, $\vec{r}_{x'}$ of a \emph{different} copy of the Hadamard matrix $H_b$, then $|C_x\cap C_{x'}|=0$.
    \item if $x,x'$ correspond to distinct rows of the \emph{same} copy of the Hadamard matrix $H_b$, then by properties of the Sylvester construction we have $|C_x\cap C_{x'}|=b/4=s/2$.
\end{itemize}
\begin{align*}
    \Paren{\frac{s e^\priv + \outab - s}{e^\priv-1}}^2 \norm{\Phi_{\priv}(\p)-\Phi_{\priv}(\q)}_2^2 
    &= s \Paren{  \sum_{x\in[\ab]} (\p(x)-\q(x) )^2+\frac{1}{2}\sum_{\substack{x' \neq x\\ x,x' \text{ same  copy}}} (\p(x)-\q(x) )(\p(x')-\q(x') ) } \\
    &= s \Paren{  \frac{1}{2}\sum_{x\in[\ab]} (\p(x)-\q(x) )^2+\frac{1}{2}\sum_{\substack{x,x' \text{ same  copy}}} (\p(x)-\q(x) )(\p(x')-\q(x') ) } \\
    &= \frac{s}{2} \Paren{  \norm{\p-\q}_2^2+\sum_{i=1}^a\sum_{x,x'\in \text{ copy } i} (\p(x)-\q(x) )(\p(x')-\q(x') ) }
\end{align*}
For $1\leq i \leq a$, denote by $T_i \subseteq[\ab]$ the subset of $x$'s mapped to a row of the $i$-th copy of the Hadamard matrix $H_b$. Then,
\begin{align*}
    \sum_{i=1}^a\sum_{x,x'\in T_i} (\p(x)-\q(x) )(\p(x')-\q(x') )
    = \sum_{i=1}^a \Paren{ \p(T_i)^2 -2\p(T_i)\q(T_i) + \q(T_i)^2 } 
    = \sum_{i=1}^a \Paren{ \p(T_i) - \q(T_i) }^2 \geq 0
\end{align*}
and we can conclude.
\end{proof}

\begin{proof}[Proof of~\cref{lemma:induced:reference:small:l2}]
Recalling~\cref{eq:induced:distribution:hr}, we can bound the $\ell_2$ norm of $\q^\ast$ as follows.
\begin{align*}
    (s e^\priv + \outab - s)^2\norm{\q^\ast}_2^2 
    &= \sum_{y\in[\outab]} \Big({ (e^\priv-1)\sum_{x\in[\ab]} \frac{\ind{y\in C_x}}{\ab} + 1 }\Big)^2 \\
    &\leq 2\frac{(e^\priv-1)^2}{\ab^2}\sum_{y\in[\outab]}\Big({\sum_{x\in[\ab]}\ind{y\in C_x}}\Big)^2 + 2\outab \\
    &= 2\frac{(e^\priv-1)^2}{\ab^2}\Big({\sum_{y\in[\outab]}\sum_{x\in[\ab]}\ind{y\in C_x} + \sum_{y\in[\outab]}\sum_{x\neq x'}\ind{y\in C_x\cap C_{x'}}}\Big) + 2\outab \\
    &= 2\frac{(e^\priv-1)^2}{\ab^2}\Big({\sum_{x\in[\ab]}|C_x| + \sum_{\substack{x,x'\in[\ab]\\x\neq x'}}|C_x\cap C_{x'}|}\Big) + 2\outab \\
    &= 2\frac{(e^\priv-1)^2}{\ab^2}\Big({s\ab + \sum_{i=1}^a \sum_{x\neq x' \in T_i}\frac{s}{2}}\Big) + 2\outab \\
    &\leq 2\frac{(e^\priv-1)^2}{\ab^2}\Big({s\ab + \frac{s}{2}\sum_{i=1}^a |T_i|^2}\Big) + 2\outab \\
    &\leq 2\frac{(e^\priv-1)^2}{\ab^2}\Big({s\ab + \frac{s}{2}s\cdot \sum_{i=1}^a |T_i|}\Big) + 2\outab \\
    &= 2\frac{(e^\priv-1)^2}{\ab^2}\Big({s\ab + \frac{1}{2}s^2\ab}\Big) + 2\outab \\
    &\leq 4\frac{(e^\priv-1)^2 s^2}{\ab} + 2\outab
\end{align*}
where the first inequality is (by laziness) $(a+b)^2 \leq 2(a^2+b^2)$, and $T_i$ is defined as in the end of the proof of ~\cref{lemma:parseval:like:ghr} (so that, in particular, $\sum_{i=1}^a |T_i|=\ab$).
We then get the bound
\[
    \norm{\q^\ast}_2^2 
    \leq \frac{4\frac{(e^\priv-1)^2 s^2}{\ab} + 2\outab}{(s e^\priv + \outab - s)^2 }
    \leq \frac{2}{\outab}\cdot \frac{12(e^\priv-1)^2 s^2 + \outab^2}{(s e^\priv + \outab - s)^2 }
    = \frac{24}{\outab}\cdot \frac{(e^\priv-1)^2 + \frac{1}{12}\Paren{\frac{\outab}{s}}^2}{(e^\priv-1 + \frac{\outab}{s})^2 }
    \leq \frac{24}{\outab}\cdot \frac{(e^\priv-1)^2 + \Paren{\frac{\outab}{s}}^2}{(e^\priv-1 + \frac{\outab}{s})^2 }
    \leq \frac{24}{\outab}
\]
as claimed (where we used, in the denominator, $(x+y)^2\geq x^2+y^2$ for $x,y\geq 0$).
\end{proof}

\section{Public-coin LDP tester}\label{public-coin bounds}
We now show how~\cref{theo:newldp}, combined with the domain compression technique (\cref{theo:random:dct:hashing}), leads to an optimal \emph{public-coin} LDP testing algorithm for all values of the privacy parameter $\priv>0$. Prior to our work, such a tester was only known for the high-privacy regime ($\priv = O(1)$).
\begin{theorem}
    \label{theo:newldp:public}
Let $\priv>0$ and $\dst\in(0,1]$. There exists a public-coin $\priv$-locally private protocol which solves $\dst$-uniformity testing with sample complexity
\[
    \sa = 
    O\Paren{ \frac{\ab}{\dst^2(e^\priv-1)^2} + \frac{\ab}{\dst^2e^{\priv/2}} + \frac{\ab^{1/2}}{\dst^2}}\,,
\]
where each user sends $\log_2\ab + O(1)$ bits. Moreover, this sample complexity is optimal among all public-coin $\priv$-locally private protocols.
\end{theorem}
\begin{proof}
The upper bound follows from our private-coin testing algorithm~\cref{theo:newldp}, combined with domain compression (\cref{{theo:random:dct:hashing}}). Specifically, using public randomness the users can reduce the domain from $\ab$ to any choice $2\leq L\leq \ab$, at the price of contracting the distance parameter $\dst$ to $\dst' \asymp \dst\sqrt{L/\ab}$ (and also paying a constant factor in the sample complexity, to amplify the probability of success). From~\cref{theo:newldp} doing so leads to a public-coin testing algorithm with sample complexity
\[
    \sa = 
    O\Paren{ \frac{L^{3/2}}{\dst^2(L/\ab)(e^\priv-1)^2} + \frac{L^{3/2}}{\dst^2(L/\ab)e^{\priv}} + \frac{L^{1/2}}{\dst^2(L/\ab)}}
    = O\Paren{ \frac{\ab L^{1/2}}{\dst^2(e^\priv-1)^2} + \frac{\ab L^{1/2}}{\dst^2e^{\priv}} + \frac{\ab}{\dst^2L^{1/2}}}\,.
\]
Choosing $L = \min(\ab, \clg{e^{\priv}})\in[2,\ab]$ gives the desired bound.

Turning to the lower bound, we observe that it suffices to show a lower bound for the middle term, as the others are known from the high-privacy regime~\cite[Theorem~V.7]{AcharyaCT20a} and the non-private regime, respectively. By~\cite[Corollary~IV.16]{AcharyaCT20a}, to establish this result it suffices to establish that $\norm{H(\cR)}_F = O(e^{\priv/2})$ for any $\priv$-LDP randomizer $\cR$, where $H(\cR)$ is the same p.s.d.\ matrix as in the proof of~\cref{theo:newldp} (but now we need to bound its Frobenius norm instead of its trace). This can be done as follows. For $y\in\cY$ and $i\in[\ab/2]$, let $w_{y,i} \eqdef \proba{\cR(2i)=y}+\proba{\cR(2i-1)=y}$. Then,
\begin{align*}
    \norm{H(\cR)}_F^2 &= \sum_{i,j\in[\frac{\ab}{2}]}\Paren{ \sum_{y\in\cY} \frac{ (\proba{\cR(2i)=y}-\proba{\cR(2i-1)=y})(\proba{\cR(2j)=y}-\proba{\cR(2j-1)=y})}{\sum_{x\in[\ab]} \proba{\cR(x)=y}} }^2\\
    &\leq \sum_{y,y'\in\cY} \frac{\Paren{\sum_{i=1}^{\ab/2} w_{y,i}w_{y',i}}^2}{\sum_{i=1}^{\ab/2} w_{y,i}\sum_{i=1}^{\ab/2} w_{y',i}}
    \leq
    \sum_{y,y'\in\cY} \frac{\Paren{2e^\priv\min_x \proba{\cR(x)=y} } \sum_{i=1}^{\ab/2} w_{y',i} \Paren{\sum_{i=1}^{\ab/2} w_{y,i}w_{y',i}}}{\sum_{i=1}^{\ab/2} w_{y,i}\sum_{i=1}^{\ab/2} w_{y',i}}
    \\
    &= \sum_{y,y'\in\cY} \frac{\Paren{2e^\priv\min_x \proba{\cR(x)=y} } \Paren{\sum_{i=1}^{\ab/2} w_{y,i}w_{y',i}}}{\sum_{i=1}^{\ab/2} w_{y,i}}
    = \sum_{y\in\cY} \frac{\Paren{4e^\priv\min_x \proba{\cR(x)=y} } \Paren{\sum_{i=1}^{\ab/2} w_{y,i}}}{\sum_{i=1}^{\ab/2} w_{y,i}} \\
    &= 4e^\priv \sum_{y\in\cY} \min_x \proba{\cR(x)=y} \leq 4e^\priv \min_x \sum_{y\in\cY} \proba{\cR(x)=y} = 4e^\priv\,,
\end{align*}
where we first bounded each $\proba{\cR(2i)=y}-\proba{\cR(2i-1)=y} \leq w_{y,i}$ and expanded the square, then used the $\priv$-LDP assumption on $\cR$, then simplified using that $\sum_{y'}w_{y',i} = 2$ for all $i$; and finally used that $\sum \min \leq \min \sum$. This establishes that $\norm{H(\cR)}_F \leq 2e^{\priv/2}$, which by~\cite[Corollary~IV.16]{AcharyaCT20a} 
shows a $\Omega\Paren{\frac{\ab}{\dst^2e^{\priv/2}}}$ lower bound on the sample complexity.
\end{proof}

\end{document}